%% file: second_revision.tex
\documentclass[journal,10pt,twocolumn]{IEEEtran}
\usepackage{float}
\usepackage{amsmath,amsthm, amsfonts}
\input{notation}

\usepackage[ruled,vlined]{algorithm2e}
\usepackage{graphicx}

\usepackage{array}
\usepackage{bm} 
\usepackage{booktabs}      % for \toprule, \midrule, \bottomrule
\usepackage{siunitx}       % for consistent units (optional but recommended)

\usepackage[most]{tcolorbox}
\tcbuselibrary{skins, breakable}
\newtcolorbox{ieeeproblem}[2][]{%
  breakable, enhanced, colback=white, colframe=black,
  title={#2}, fonttitle=\bfseries, % etc.
  #1
}

\usepackage{textcomp}
\usepackage{stfloats}
\usepackage{url}
\usepackage{verbatim}
\usepackage{tikz}
\usetikzlibrary{arrows.meta, calc, decorations.pathreplacing, positioning, 3d}

\usepackage[noadjust]{cite}
\usepackage{xcolor}
\usepackage{amsmath}
\allowdisplaybreaks[2]          % amsmath may break align at reasonable points

\usepackage{amssymb}
\usepackage{mathtools}
\usepackage{epstopdf}
\usepackage{graphicx}    % ensures 
\usepackage[
    font=small,
    skip=4pt,
    labelfont=bf,
    belowskip=4pt,
    compatibility=false
]{caption}
\usepackage{subcaption}
\usepackage{lipsum}

\usepackage{cuted}

\newtheorem{proposition}{Proposition}

\begin{document}
% \bstctlcite{BSTcontrol}
\title{{Optimal Anchor Placement for Wireless Localization in Mixed LOS and NLOS Scenarios}\\}

\author{
~Gaurav~Duggal,~R.~Michael~Buehrer,~Harpreet~S.~Dhillon, Jeffrey~H.~Reed
 \thanks{\\
 Gaurav Duggal, R. Michael Buehrer, Harpreet S. Dhillon, and Jeffrey H. Reed are with Wireless@VT, Bradley Department of Electrical and Computer Engineering, Virginia Tech,  Blacksburg, VA, 24061, USA. Email: \{gduggal, rbuehrer, hdhillon, reedjh\}@vt.edu.\\ The support of NIST PSCR PSIAP through grant 70NANB22H070, NSF through grant CNS-1923807, CNS-2107276 and NIJ graduate research fellowship through grant 15PNIJ-23-GG-01949-RES is gratefully acknowledged.\\ }
}

% The paper headers
% \markboth{Indoor Positioning by Exploiting Diffraction}%
% {Shell \MakeLowercase{\textit{et al.}}: A Sample Article Using IEEEtran.cls for IEEE Journals}

% \IEEEpubid{0000--0000/00\$00.00~\copyright~2021 IEEE}
% Remember, if you use this you must call \IEEEpubidadjcol in the second
% column for its text to clear the IEEEpubid mark.

\maketitle

\begin{abstract}
This paper develops a general optimal experimental design framework for anchor placement in localization and related estimation problems whose Fisher information matrix (FIM) can be expressed as a weighted sum of rank-one matrices. Within this framework, anchor locations are selected to optimize Cramér--Rao lower bound (CRLB)-based performance measures, while accounting jointly for measurement geometry and information strength. As a challenging motivating application, we consider localization in diffraction-dominated non-line-of-sight (NLOS) outdoor-to-indoor (O2I) environments, where the direct path is severely attenuated or blocked. Although diffraction-based time-of-arrival (TOA) measurements can provide useful positioning information, optimal anchor placement is less intuitive than in conventional line-of-sight (LOS) settings because both the measurement directions and their information strengths depend on the propagation environment.
The proposed $S_n,r_n$ -based FIM representation compactly captures this interaction between geometry and signal strength. For a single target, it yields a polygon-closure interpretation and characterizes the conditions under which A-, D-, and E-optimal anchor geometries coincide. We then extend the framework to region-wide anchor placement by constructing a precomputed anchor--target dictionary and formulating min--max E- and D-optimal anchor-selection problems as mixed-integer second-order cone programs. Numerical results demonstrate the resulting anchor layouts, localization bounds, and computational tradeoffs in a mixed-LOS/NLOS O2I environment. While the analysis is instantiated and evaluated for diffraction-based O2I localization, its underlying results apply broadly to estimation and anchor-placement problems with weighted rank-one-sum FIMs.
\end{abstract}

\begin{IEEEkeywords}
Time-Of-Arrival localization, optimal experimental design, Fisher information, Cramér--Rao lower bound, mixed LOS/NLOS propagation, diffraction, optimization.
\end{IEEEkeywords}

\section{Introduction}
TOA–based localization~\cite{zekavat2019handbook} forms the foundation of many modern positioning systems, ranging from global navigation satellite system (GNSS) to indoor and industrial networks that rely on ultra-wideband (UWB), Wi-Fi, and cellular 5G/6G technologies. Its widespread adoption stems from its ability to provide high-precision range estimates using simple timing measurements between transmitters and receivers. For this study we use {\em anchors} to  refer to entities whose positions are known {\em a priori}, such as base stations, access points, or reference beacons, while {\em targets} represent the agents whose positions are to be estimated. The placement of these anchors has an impact on localization accuracy and robustness. The localization error depends on both the accuracy of the TOA measurements and the spatial configuration of the anchors relative to the target. While TOA measurement error is primarily determined by factors such as signal bandwidth, signal strength, and noise level, the anchor geometry dictates how these measurement errors propagate into the estimate of the position. This geometric sensitivity is commonly quantified by the geometric dilution of precision (GDOP), which captures the amplification of measurement error due to anchor placement. However, GDOP does not account for signal coverage, which affects the measurement error and is hence directly influenced by the anchor–target distances, transmit power, and propagation conditions that govern the signal-to-noise-ratio (SNR) of each TOA link. Hence, localization accuracy can be enhanced by deploying anchors in a manner that
(a) improves the SNR between each anchor and the target, and
(b) ensures a favorable spatial configuration that yields a good anchor geometry (low GDOP). 
\par
Classical works on optimal anchor deployment have predominantly considered 
LOS scenarios, where localization accuracy is determined by 
the anchor geometry and the SNR of the direct propagation paths. 
Under these conditions, anchors that are spatially well-separated and exhibit 
large angular diversity with respect to the target position or the region of interest \cite{ho2008sensor,nguyen2016optimal,mengoptimal2013,meng2016optimal,sadeghi2020optimal,sahu2022optimal,xu2021optimal,wu2023optimization,aubrytsp2023,fatima2024optimal,rao2024iterative,xu2025optimal,zhang2025optimal} 
are regarded as geometrically advantageous, as they help minimize position ambiguity and improve estimation accuracy.
 Under this notion of geometric optimality, and assuming distance-independent path loss the theoretical lower bound on the root mean squared error (RMSE) for any unbiased position estimator has also been derived \cite{xu2019optimal,xu2021optimal,xu2025optimal}. However, this notion of geometric optimality is valid only under LOS propagation. In realistic  mixed LOS/NLOS environments, propagation effects such as reflection, diffraction \cite{duggal2025diffractionaidedwirelesspositioning, pallaprolu2025}, and attenuation distort range measurements and degrade SNR, and also invalidate the LOS-based geometric intuition for localization applications. Furthermore, most existing formulations define optimality with respect to a single fixed target whose position is to be estimated. These approaches typically assume prior knowledge of the target position, which is not very practical since in real scenarios the target location is unknown and may vary across a wide region. It is therefore essential to extend anchor deployment strategies to the multi-target setting, enabling area-wide optimization and ensuring consistent localization performance across the entire region of interest. Preliminary efforts toward multi-target formulations exist, notably \cite{xu2021optimal}, which analyzes a two-target case, and \cite{wu2023optimization,aubrytsp2023}, which extends the framework to multiple targets. However, these studies assume distance-independent path loss and pure LOS conditions. 
\par
The optimality of anchor geometry is closely related to the field of 
\textit{optimal experimental design}~\cite{pukelsheim2006optimal}, 
where scalar functions of the CRLB matrix, 
derived from the FIM, are optimized to 
achieve desired localization accuracy. There are three basic criteria: {\em A-optimality} which minimizes the trace of the CRLB matrix \cite{ho2008sensor,sadeghi2020optimal,xu2021optimal,sahu2022optimal,wu2023optimization,aubrytsp2023,fatima2024optimal,zhang2025optimal,xu2025optimal}, {\em D-optimality} \cite{mengoptimal2013,sadeghi2020optimal,sahu2022optimal,aubrytsp2023} which minimizes the determinant of the CRLB matrix and {\em E-optimality} \cite{sadeghi2020optimal,sahu2022optimal,fatima2024optimal} which minimizes the largest eigenvalue of the CRLB matrix. Optimization frameworks that optimize anchor geometry over all three criteria have been proposed, in particular  
\cite{sadeghi2020optimal} shows all three criteria to be equivalent for their formulation that assumes distance-independent path loss, while \cite{sahu2022optimal} extends this to include distance-dependent path loss.
\par
In terms of optimization techniques, most optimality formulations are nonconvex with respect to the anchor positions, mainly due to the presence of distance-dependent path loss. Some tackle this non-convexity using heuristic approaches like genetic algorithms \cite{nguyen2016optimal}. Other approaches in literature directly derive analytical solutions \cite{meng2016optimal,xu2019optimal,sadeghi2020optimal}. Other works propose frameworks which tackle the nonconvex problem iteratively by updating one block of variables at a time using convex surrogate functions that locally approximate and upper bound the original nonconvex objective, ensuring a monotonic decrease in the objective value at each iteration \cite{aubrytsp2023,wu2023optimization,fatima2024optimal,xu2025optimal}. In \cite{rao2024iterative}, the authors employ a heuristics-based iterative approach where anchors are added sequentially as a heuristic in the interest of tractability. Although the authors of \cite{joshi2008sensor} do not discuss anchor optimality, they propose an interesting approach to solving the  problem of choosing a set of $K$ sensor measurements, from a set of $M$ possible or potential sensor measurements, that minimizes the error in estimating some parameter of the measurement. This is in general a nonconvex continuous problem which is reduced to a convex combinatorial sensor-selection problem. We extend their ideas to formulate our own optimization framework in this study.

In addition to theoretical interest, this problem has critical relevance to indoor localization for next-generation emergency response networks, such as firefighting operations, disaster rescue, and active-shooter incidents \cite{duggal2023icc}. In such scenarios, precise localization of both first responders and at-risk individuals is vital for situational awareness and safety. Relying on pre-installed localization infrastructure (e.g., Wi-Fi access points or cellular base stations) is often impractical or unreliable during emergencies, when power failures, infrastructure damage, or dense building materials can lead to degraded signal coverage indoors. To address this, mobile anchors—such as UAVs — can be rapidly deployed around the affected building to provide ad hoc localization support \cite{harishion2023}. These systems operate in O2I environments where signal propagation is dominated by NLOS mechanisms, particularly diffraction from window edges rather than direct LOS paths \cite{duggal2025diffractionaidedwirelesspositioning}. In general,  diffraction has been shown to emerge due to bending around edges \cite{zhang2022time,pallaprolu2022wiffract,pallaprolu2025}. Designing optimal anchor deployments in such environments thus requires jointly accounting for anchor geometry, placement flexibility, and the physical signal propagation mechanisms in NLOS scenarios.
Motivated by these limitations, the main contributions of this paper are as follows:
\begin{itemize}
    \item \textbf{Unified mixed LOS/NLOS FIM framework and general \((S_n,r_n)\) reformulation:}
    We develop a unified Fisher-information framework for TOA-based localization in mixed LOS/NLOS environments with diffraction-dominated O2I propagation. The framework combines a generalized path-length model with a continuous path-loss/SNR model that remains well behaved across the LOS/NLOS transition. This leads to a compact \((S_n,r_n)\)-based representation of the two-dimensional (2D) localization FIM. Although derived here for the proposed diffraction model, the reformulation is more general and applies to any 2D localization modality whose FIM can be written as a sum of weighted rank-one outer products of unit vectors.

    \item \textbf{Single-target optimality theory via the \((S_n,r_n)\) reformulation:}
    Building on the \((S_n,r_n)\) representation, we analyze A-, D-, and E-optimal anchor placement for a single target under distance-independent ranging-information weights. We show that all three criteria are simultaneously optimized when the residual term vanishes, yielding a polygon-closure interpretation in the complex plane and necessary and sufficient conditions for optimality through the generalized polygon inequality. Thus, in this regime, A-, D-, and E-optimal designs coincide.

    \item \textbf{Region-wide multi-target anchor optimization over a discretized anchor-target dictionary:}
    We extend the \((S_n,r_n)\)-based framework to region-wide optimization over multiple targets while reintroducing distance-dependent path loss. By discretizing the feasible three-dimensional (3D) anchor region, we recast the continuous placement problem as a combinatorial anchor-selection problem over a precomputed anchor-target dictionary. This yields two exact MISOCP formulations for the discretized min--max problem, corresponding to the E- and D-optimal criteria, while A-optimality is handled through provable relationships showing that these objectives also control the A-optimal criterion. Moreover, when solved by branch-and-bound, the formulations admit standard feasibility, \(\epsilon\)-optimality, and infeasibility certificates for the discretized problem.
\end{itemize}

\section{Mathematical Preliminaries}

\subsection{Fisher information for Parameter Estimation}
Let $\bm{y}\triangleq \{y_1,\cdots,y_M\}$ be the observation vector, $\bm{\eta}\triangleq\{\eta_1,\cdots,\eta_K\}$ be the parameter vector of the distribution of $\bm{y}$ and the likelihood function be $\chi(\bm{y}\mid\bm{\eta})$.
\begin{ndef}
\label{definition_FIM}
The FIM \cite{kay1993fundamentals} $\bm{\mathcal{I}}_{\bm{\eta}}\in\mathbb{R}^{K\times K}$ is
\begin{equation}
\bm{\mathcal{I}}_{\bm{\eta}}
\;=\; \mathbb{E}_{\bm{y}\mid\bm{\eta}}\!\left[\nabla_{\bm{\eta}}\ln\chi(\bm{y}\mid\bm{\eta})\,\nabla_{\bm{\eta}}\ln\chi(\bm{y}\mid\bm{\eta})^{\!\top}\right],
\end{equation}

\end{ndef}

% The rank of the FIM provides critical insight into the identifiability and estimability of the parameters in a statistical model.
% \begin{remark}
% \label{remark_rank_FIM}
% If the Fisher information Matrix (FIM) is full rank, each parameter is locally identifiable and can be independently estimated from the data \cite{kay1993fundamentals}. Conversely, if the FIM is rank deficient, the variance of some parameter estimates becomes unbounded, rendering those parameters unidentifiable.
% \end{remark}

\subsection{FIM Under Reparameterization}
In many estimation problems, the likelihood is naturally expressed in terms of one parameter vector, while the quantities of interest are represented by another parameterization related through a known differentiable mapping. In such cases, the Fisher information for the parameters of interest can be obtained by reparameterizing the Fisher information of the original parameter vector as follows.
\begin{ndef}
\label{def_FIM_transformation}
Let $\bm{\eta}\in\mathbb{R}^K$ be a parameter vector with FIM
$\bm{\mathcal{I}_{\bm{\eta}}}\in\mathbb{R}^{K\times K}$. Let $\bm{\xi}\in\mathbb{R}^L$
be another parameter vector such that
\[
\bm{\eta}=\bm{f}(\bm{\xi}),
\qquad
\bm{f}:\mathbb{R}^L\rightarrow\mathbb{R}^K,
\]
where $\bm{f}$ is differentiable. Define the Jacobian matrix
\[
\bm{J}(\bm{\xi}) \triangleq \frac{\partial \bm{\eta}}{\partial \bm{\xi}}
\in\mathbb{R}^{K\times L}.
\]
Then the FIM of $\bm{\xi}$,
$\bm{\mathcal{I}_{\bm{\xi}}}\in\mathbb{R}^{L\times L}$, is given by
\begin{equation}
    \bm{\mathcal{I}_{\bm{\xi}}}
    =
    \bm{J}(\bm{\xi})^T
    \bm{\mathcal{I}_{\bm{\eta}}}
    \bm{J}(\bm{\xi}).
\end{equation}
\end{ndef}

\subsection{Equivalent FIM}
In many applications, only a subset of parameters is of interest, with the rest treated as nuisance variables. The equivalent FIM (EFIM) compactly captures the information pertaining to the parameters of interest. 

\begin{ndef}
\label{def_EFIM}
Consider the parameter vector $\bm{\eta}=[\bm{\eta}_1^T, \bm{\eta}_2^T]^T$ with 
$\bm{\eta}\in\mathbb{R}^N$ and $\bm{\eta}_1 \in \mathbb{R}^K$, where $K<N$. 
The overall FIM can be partitioned as
\[
\mathbf{I}_{\bm{\eta}}=
\begin{bmatrix}
\bm{\mathcal{I}_{\bm{\eta}_1,\bm{\eta}_1}} & 
\bm{\mathcal{I}_{\bm{\eta}_2,\bm{\eta}_1}^T} \\
\bm{\mathcal{I}_{\bm{\eta}_2,\bm{\eta}_1}} & 
\bm{\mathcal{I}_{\bm{\eta}_2,\bm{\eta}_2}}
\end{bmatrix},
\]
where 
$\bm{\mathcal{I}_{\bm{\eta}_1,\bm{\eta}_1}} \in \mathbb{R}^{K\times K}$, 
$\bm{\mathcal{I}_{\bm{\eta}_2,\bm{\eta}_1}} \in \mathbb{R}^{(N-K)\times K}$, and 
$\bm{\mathcal{I}_{\bm{\eta}_2,\bm{\eta}_2}} \in \mathbb{R}^{(N-K)\times(N-K)}$. 
The EFIM of $\bm{\eta}_1$, obtained by treating $\bm{\eta}_2$ as nuisance parameters, 
is given by the Schur complement \cite{horn2012matrix} as 
\begin{equation}
\label{eq_definition_EFIM}
\bm{\mathcal{I}_{\bm{\eta}_1,\bm{\eta}_1}^{[e]}}
= \bm{\mathcal{I}_{\bm{\eta}_1,\bm{\eta}_1}} 
- \bm{\mathcal{I}_{\bm{\eta}_2,\bm{\eta}_1}^T}
\bm{\mathcal{I}_{\bm{\eta}_2,\bm{\eta}_2}^{-1}}
\bm{\mathcal{I}_{\bm{\eta}_2,\bm{\eta}_1}}.
\end{equation}
\end{ndef}

\subsection{CRLB on Parameter Estimation}
The CRLB provides a fundamental lower bound on the covariance of any unbiased estimator.
\begin{ndef}
\label{def_crlb}
 Let $\hat{\bm{\eta}} \in \mathbb{R}^K$ be an unbiased estimator of the parameter vector $\bm{\eta} \in \mathbb{R}^K$, derived from observations $\bm{y}$. Then, the error covariance matrix of $\hat{\bm{\eta}}$ satisfies the information inequality:
\begin{equation}
    \mathbb{E}_{\bm{y}|\bm{\eta}} \left[ (\hat{\bm{\eta}} - \bm{\eta})(\hat{\bm{\eta}} - \bm{\eta})^\top \right] \succeq \bm{\mathcal{I}_{\bm{\eta}}^{-1}}
\end{equation}
where $\bm{\mathcal{I}_{\bm{\eta}}} \in \mathbb{R}^{K \times K}$ is the FIM, defined element-wise as in Definition \ref{definition_FIM} and $\bm{\mathcal{I}_{\bm{\eta}}^{-1}}$ is called the CRLB matrix.
\end{ndef}

\section{Model for Localization in Mixed LOS/NLOS O2I Scenario}
% \begin{figure*}[!htbp]
%     \begin{subfigure}{0.45\linewidth}
%         \centering
%         % \includegraphics[clip, trim=4.5cm 8cm 5.5cm 8.5cm, width=1\linewidth]
% \includegraphics[clip, trim=0cm 0cm 0cm 0cm, width=1\linewidth]{figs/System_Model.pdf}
%         \caption{System Model}
%         \label{fig_system_model_exact}
%     \end{subfigure}
%     \begin{subfigure}{0.45\linewidth}
%         \centering
% \includegraphics[clip, trim=0cm 0cm 0cm 0cm, width=1\linewidth]{figs/System_Model_approx.pdf}
%         \caption{System Model Approx}
%         \label{fig_system_model_approx}
%     \end{subfigure}

% \caption{\GD{Solve weird svg to pdf issue (causing missing text and misaligned boxes)} } 
%     \label{fig_system_model}
% \end{figure*}

\begin{figure}[htbp]
    \centering
    \includegraphics[width=1\linewidth]{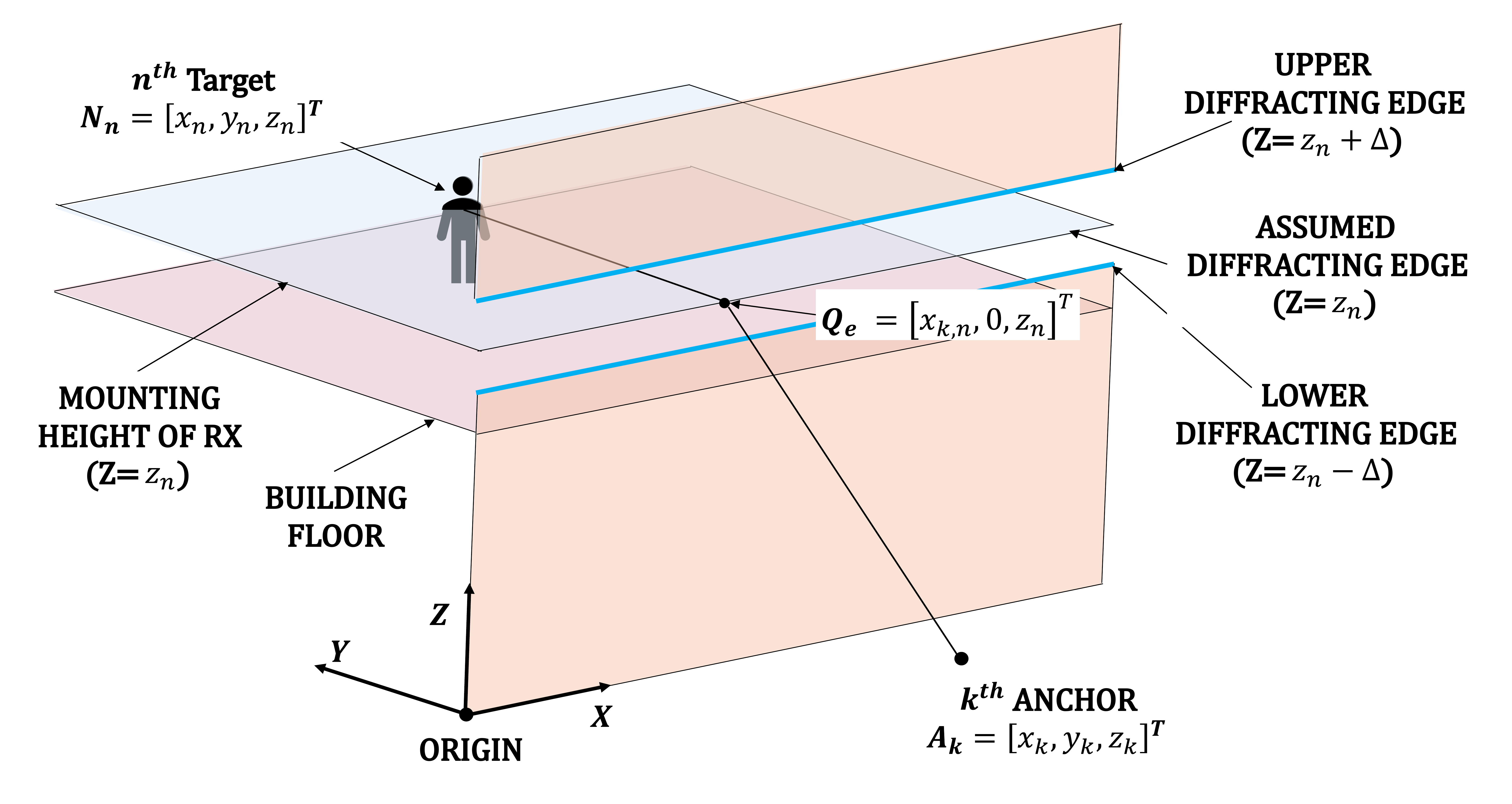}
    \caption{In the O2I scenario, we have \(K\) anchors transmitting orthogonal signals which are received by the $n^{\text{th}}$ target inside the building. 
The received signal includes several MPCs, from which the ranging measurement corresponding to the diffraction path length 
\(\bm{A_k}\bm{Q}_e\bm{N}_n\) is extracted. 
A bandwidth of \(200\) MHz is assumed to provide fine delay resolution.
In this study, we assume that the relevant diffraction
path is reliably identifiable as the first-arriving path and is
sufficiently separated from nearby MPCs so that the single-path
ranging-information approximation in \eqref{eq_ranging_uncertainty_diffraction_path} is valid.}

    \label{fig_system_model}
\end{figure}

% \begin{figure}[h]
%     \centering
%    \includegraphics[width=0.8\linewidth]{figs/System_Model.pdf}
%   \caption{System Model}
%   \label{fig_system_model}
% \end{figure}

As shown in \figref{fig_system_model}, we consider an O2I scenario with $K$ anchors and $N$ indoor targets, indexed by $\mathcal{K}\triangleq\{1,\dots,K\}$ and $\mathcal{N}\triangleq\{1,\dots,N\}$, respectively. Each anchor $k\in\mathcal{K}$ transmits a signal that is received by target $n\in\mathcal{N}$. The anchor waveforms are mutually orthogonal, so the targets experience no inter-anchor interference. From the received signal of each anchor, a target obtains one Time-of-Flight (TOF) range estimate, typically using the first-arrival path criterion \cite{duggal2025}. In the O2I setting considered here, this criterion isolates the diffraction path, whose path-length model, as we show in Section \ref{section_diffraction_path_length_model}, is continuous across the LOS/NLOS transition region.

Consistent with \cite{duggal20243d}, we assume that the floor containing target \(n\) is known, e.g., from barometer data \cite{Firstnetroadmap}. The localization problem for that target therefore reduces to estimating its 2D in-floor position from \(K\) TOF measurements. However, the anchor geometry remains fully three-dimensional: anchors are not restricted to the plane \(Z=z_n\), and may be located anywhere in 3D space. Hence, even when targets are distributed across multiple floors, the corresponding anchor--target path lengths are determined by the full 3D geometry and generally differ across both anchors and floors. In this sense, the known-floor assumption reduces only the dimension of the target state to be estimated, not the dimension of the anchor deployment or the underlying propagation model. For the corresponding full 3D localization formulation, see \cite{duggal2025diffractionaidedwirelesspositioning}. None of these prior works addresses the optimal anchor placement problem considered here.

\subsection{Received Signal Model}
Assume that each anchor transmits a waveform $s_k(t)$, such that it is mutually orthogonal between the anchors and has a flat power spectral density (PSD) i.e.
\begin{equation}
\label{eq_flat_PSD}
|S_k(f)|^2 =
\begin{cases}
\frac{1}{B}, & \text{for } f \in \left[-\frac{B}{2}, \frac{B}{2}\right] \\
0, & \text{otherwise}
\end{cases}.
\end{equation}
The signal received at target $n$ from anchor $k$ during the observation interval $T_{\text{ob}}$ can be expressed as
\begin{align}
\label{eq_received_signal}
\begin{split}
r_n(t) &= \sum_{l=1}^{L}h_l\; s_k(t-\tau_l)+ n_n(t),\quad t\in (0,T_{\text{ob}}) \\
&= \underbrace{h_1   s_k(t-\tau_1)}_{\text{Diffraction MPC}} + \underbrace{\sum_{l=2}^{L}h_l   s_k(t-\tau_l)}_{\text{Other MPCs}}+ n_n(t)
\end{split}
\end{align}
This equation is to be interpreted as the received signal being a superposition of delayed replicas of the transmitted waveform from the $k^{\text{th}}$ anchor corresponding to $L$ multipath components (MPCs) represented by index set $\mathcal{L}\triangleq \{1,\dots,L\}$. Each MPC $l \in \mathcal{L}$ is characterized by a delay $\tau_l$, determined by its propagation path length, and a path gain $h_l$, which accounts for attenuation along that path. MPCs generally arise from electromagnetic interactions with the environment, including specular reflections from planar surfaces (e.g., walls, floors, and ceilings), transmissions through the building exterior and inter-floor structures, and diffractions from window edges present on each floor. 
At FR2 and FR3 in O2I NLOS settings, the direct outdoor-to-indoor path is effectively absent due to heavy attenuation from exterior walls, floors, and ceilings, making diffraction the dominant mechanism; its MPCs can be used for localization \cite{duggal2025diffractionaidedwirelesspositioning}. Among NLOS MPCs (reflections and diffractions), diffraction paths are reliably isolated via the first-arriving path (FAP). Frequency-dependent results in \cite{duggal2025,duggal2025diffractionaidedwirelesspositioning} show FR3 incurs lower diffraction loss than FR2, enabling better localization performance.

Let $l=1$ correspond to the diffraction MPC between the $k^{\text{th}}$ anchor and the target $n$, as illustrated in \figref{fig_system_model}. The TOF-based ranging measurement is a noisy estimate of the length of the diffraction path.
In the next section, we begin by introducing the diffraction path length formulation from \cite{duggal2025diffractionaidedwirelesspositioning}. In this study we obtain a simplified version that shows that it is in fact a generalization of the conventional Euclidean path model.

\subsection{A Unified LOS/NLOS Path Model}
\label{section_diffraction_path_length_model}
Since the propagation delays for the various MPCs are determined by their associated path lengths, we adopt a diffraction-based path length model first introduced in ~\cite{Ang1999DiffractionPoints} and then independently rediscovered and explicitly used for localization in ~\cite{duggal20243d,duggal2025diffractionaidedwirelesspositioning}. We first show both these formulations are mathematically equivalent. The full derivation is in Appendix~\ref{appendix_diffraction_path_length_derivation}. This model provides a unified treatment of LOS and NLOS, with a smooth transition from Euclidean to diffraction path length. It eliminates the need to label paths as LOS or NLOS—the geometry determines the appropriate regime.

In \figref{fig_system_model}, two diffraction edges run parallel to the \(x\)-axis, representing the upper and lower window edges on the target’s floor. They lie at the intersections of the plane \(Y=0\) with \(Z=z_n+\Delta\) (upper edge) and \(Z=z_n-\Delta\) (lower edge). Here $2\Delta$ is the vertical size of the windows. Let $e \in \{l,u\}$ represent the lower and upper edge respectively. The location of the diffraction point on the respective diffracting edge for the $k^{\mathrm{th}}$ anchor, is given by $\bm{Q}_{k,n}^{\langle\text{e}\rangle} = [x_{k,n}^{\langle\text{e}\rangle},0,z_n+s_e\Delta]^T$, where 
\begin{align}
x_{k,n}^{\langle\text{e}\rangle}  = x_n\!-\!(x_n\!-\!x_k) \frac{\sqrt{y_n^2\!+\!\Delta^2}}{\sqrt{y_k^2\!+\!(z_n\!+\!s_e\Delta-z_k)^2}\!+\!\sqrt{y_n^2\!+\!\Delta^2}}\\ \nonumber
\qquad \qquad \qquad e\in\{u,l\},\; s_u=+1,\; s_l=-1.
\end{align}    
Now, the diffraction path length $p_{k,n}$ between the $k^{\mathrm{th}}$ anchor at $\bm{A}_k = [x_k,y_k,z_k]^T$ and $n^{\text{th}}$ target at $[x_n,y_n,z_n]^T$ can then be expressed as  
{\small
\begin{align}
\label{eq_diffraction_path_length}
p_{k,n}^{\langle\text{e}\rangle}\!=\!\sqrt{(x_k\!-\!x_n)^2\!+\!\left( \sqrt{y_k^2\!+\!(z_n\!+\!s_e\Delta\!-\!z_k)^2}\!+\! \sqrt{y_n^2\!+\!\Delta^2} \right)^2} \\ \nonumber
\qquad \qquad \qquad e\in\{u,l\},\; s_u=+1,\; s_l=-1.
\end{align}    
}
We assume we can reliably isolate the shortest diffraction path using the first-arriving path principle \cite{duggal2025diffractionaidedwirelesspositioning}. If the anchor is above the target, the upper-edge diffraction path is shortest; if below, the lower-edge path is shortest. We can simplify the model by assuming $\Delta=0$.
 Note that we made the same assumption in our previous work~\cite{duggal2025diffractionaidedwirelesspositioning} where we demonstrated that $\Delta=0$ has minimal effect on the positioning accuracy. Hence, this leads to the following simplified path model
\begin{align}
\label{eq_diffraction_path_length_approx}
    p_{k,n} \approx \sqrt{(x_k - x_n)^2 + \left( \sqrt{y_k^2 + (z_k-z_n)^2} + y_n \right)^2}.
\end{align}
\begin{remark}
    When the anchor lies in the same horizontal plane as the target, i.e., substitute $z_k = z_n$ in \eqref{eq_diffraction_path_length_approx}, the expression simplifies to 
    \[
        p_{k,n} = \sqrt{(x_k - x_n)^2 + (|y_k| + y_n)^2}.
    \]
    In this case, the additional diffraction-related detour vanishes, and the path reduces to the direct Euclidean distance in the 2D plane $Z=z_n$. This corresponds precisely to a LOS path between the anchor and the target, hence presenting the diffraction path length as a generalization of the Euclidean path model.
\end{remark}
\subsection{Unified Path Loss for O2I Mixed LOS/NLOS Scenario}

There are several ways to model the SNR of a diffraction path \cite{duggal2025diffractionaidedwirelesspositioning,duggal2025,mcnamara1990introduction,balanis2012advanced}. In this
work, we use the ITU-R knife-edge diffraction (KED) model as an
approximate, easy-to-compute power-domain model that does not require
ray tracing, phase, or polarization information \cite{ITU_diffraction}.
This is suitable for our mixed LOS/NLOS O2I setting, where same-floor
anchor-target links are treated as LOS, while links from anchors above
or below the target floor are modeled as diffraction-dominated NLOS
links \cite{duggal2023icc}.

For the simplified O2I geometry, define the effective diffraction
intrusion as
\[
h\triangleq |z_k-z_n|,
\]
which is zero for the same-floor LOS case and positive for anchors
located above or below the target floor. This is an effective
nonnegative intrusion parameter for the simplified O2I KED model, rather
than the signed clearance used in the canonical ITU-R terrain-profile
model. The distances from the anchor and target to the diffraction point are
\begin{align}
\label{eq_3d_distance_diff_point_fresnel_params}
d_{1}
=
\left\lVert \bm{A}_k-\bm{Q}_{k,n}^{\langle e\rangle}\right\rVert,
\qquad
d_{2}
=
\left\lVert \bm{N}_n-\bm{Q}_{k,n}^{\langle e\rangle}\right\rVert .
\end{align}
The Fresnel-Kirchhoff parameter is
\[
\nu
=
h
\sqrt{
\frac{2f_c(d_1+d_2)}{c d_1d_2}
},
\]
where \(c\) is the speed of light and \(f_c\) is the carrier frequency.
For NLOS links, the KED excess loss is \cite{ITU_diffraction}
\begin{align}
\label{eq_excess_path_loss}
L_d(\nu)
=
6.9+20\log_{10}
\Big(
\sqrt{(\nu-0.1)^2+1}+\nu-0.1
\Big)
\ \text{dB}.
\end{align}
For same-floor LOS links, we set \(L_d(\nu)=0\).

The received SNR in dB is then
\begin{align}
\label{eq_SNR_diffraction_path}
\mathrm{SNR}_{\mathrm{dB}}
&=
P_t^{\mathrm{dBW}}
+G_t+G_r
-\mathrm{FSPL}(R,f_c)
-L_d(\nu) \\
&\qquad\qquad\qquad-\Bigl[
10\log_{10}(K_BT B)+\mathrm{NF}
\Bigr],\nonumber
\end{align}
where
\[
\mathrm{FSPL}(R,f_c)
=
20\log_{10}\left(\frac{4\pi Rf_c}{c}\right),
\qquad
R=\lVert \bm{N}_n-\bm{A}_k\rVert .
\]
Here \(P_t\) is the transmit power, \(G_t\) and \(G_r\) are the antenna
gains, \(K_BTB\) is the thermal noise power, and \(\mathrm{NF}\) is the
receiver noise figure.

\textit{Note:} The Friis term in \eqref{eq_SNR_diffraction_path} uses
the anchor-target distance \(R\) as the free-space reference, while
\(L_d(\nu)\) models the ITU-R KED excess loss. Replacing \(R\) by the
diffraction length \(p_{k,n}=d_1+d_2\) would additionally account for
spreading along the diffracted path; this secondary correction is
neglected here.

\subsection{TOA Ranging Estimation in Multipath Scenarios}

% Our goal in this section is to obtain the CRLB on the TOF-based ranging estimate corresponding to the diffraction MPC in the presence of other MPCs. 
\begin{lemma}
Assuming the receive signal model in \eqref{eq_received_signal}, and conditioning on the complex path gains ${h_l}_{l=1}^{L}$, the FIM for estimating our delay vector $\bm{\tau}\triangleq[\tau_1,\dots,\tau_L]^T$is
% \begin{align}
% \small
% \label{eq_FIM_tau}
% &[\bm{J_\eta}]_{l_1,l_2} = \\ \nonumber
% &\begin{cases}
% \displaystyle
% \frac{2\pi^2 B^2 |h_{l_1}|^2}{3N_0}, &  l_1 = l_2 \\[10pt]
% \frac{2}{N_0} \operatorname{Re} \left\{
% h_{l_1}^* h_{l_2} 
% \frac{8\pi \left( \pi \delta_{l_1l_2} \cos(\pi B \delta_{l_1l_2}) - 2 \sin(\pi B \delta_{l_1l_2}) \right)}{\delta_{l_1l_2}^3} \right\}, & l_1 \ne l_2
% \end{cases}\\ \nonumber
% & \quad \quad \quad \quad \quad \quad \quad \quad \quad \quad \quad \quad \quad \quad \quad \quad \quad \quad \quad  l_1,l_2 \in 1\dots L
% \end{align}
{\small
\begin{align}
\label{eq_FIM_tau}
&[\bm{\mathcal{I}_\bm{\tau}}]_{l_1,l_2} = \\[-2pt] \nonumber
&\begin{cases}
\displaystyle
\frac{2\pi^2 B^2 |h_{l_1}|^2}{3N_0}, 
& l_1 = l_2 \\[12pt]
\displaystyle
\frac{2}{N_0}\,\Re\!\Bigg\{
h_{l_1}^* h_{l_2}\;
\frac{1}{(\pi B)\,\delta_{l_1 l_2}^{3}}
\Big[
(\pi B\delta_{l_1 l_2})^{2}\sin\!\big(\pi B\delta_{l_1 l_2}\big) \\[4pt]
\qquad\qquad\qquad\qquad\qquad\quad
+\,2(\pi B\delta_{l_1 l_2})\cos\!\big(\pi B\delta_{l_1 l_2}\big) \\[4pt]
\qquad\qquad\qquad\qquad\qquad\quad
-\,2\sin\!\big(\pi B\delta_{l_1 l_2}\big)
\Big]
\Bigg\},
& l_1 \ne l_2
\end{cases} \\[2pt] \nonumber
& \hspace{6cm} l_1,l_2 \in \{1,\dots,L\}
\end{align}
}

Here $l_1,l_2$ are pairwise indices of the MPCs and correspond to the row and column indices of the FIM and $\delta_{l_1,l_2}= \tau_{l_1}-\tau_{l_2}$ is called the path overlap coefficient.    
\end{lemma}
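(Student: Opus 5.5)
The plan is to start from the standard Fisher information for deterministic parameters in complex AWGN and then use Parseval's theorem together with the flat PSD \eqref{eq_flat_PSD}. The path gains $\{h_l\}$ are treated as known, and $n_n(t)$ is taken as circular white Gaussian noise of spectral density $N_0$.

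\textbf{Step 1 (AWGN Fisher information).} The log-likelihood of $r_n(t)$ on $(0,T_{\text{ob}})$ is, up to a constant, $-\frac{1}{N_0}\int |r_n(t)-\mu(t;\bm{\tau})|^2\,dt$, where $\mu(t;\bm{\tau})=\sum_l h_l s_k(t-\tau_l)$. Inserting this into Definition~\ref{definition_FIM} gives the usual expression
\[
[\bm{\mathcal{I}_{\bm{\tau}}}]_{l_1,l_2}=\frac{2}{N_0}\Re\Big\{\int \frac{\partial \mu^*}{\partial\tau_{l_1}}\frac{\partial \mu}{\partial\tau_{l_2}}\,dt\Big\}
=\frac{2}{N_0}\Re\Big\{h_{l_1}^*h_{l_2}\int \dot s_k^*(t-\tau_{l_1})\,\dot s_k(t-\tau_{l_2})\,dt\Big\}.
\]
I will assume $T_{\text{ob}}$ is long enough that the delayed pulses lie inside the window, so the integral may be taken over $\mathbb{R}$.

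\textbf{Step 2 (frequency domain).} The derivative $\dot s_k(t-\tau)$ has Fourier transform $j2\pi f\,S_k(f)e^{-j2\pi f\tau}$. By Parseval, the correlation integral equals
\[
\int (2\pi f)^2|S_k(f)|^2 e^{\,j2\pi f\delta_{l_1l_2}}\,df .
\]
The sign of the exponent is irrelevant: $|S_k|^2$ is even, so the sine part integrates to zero. With \eqref{eq_flat_PSD} this reduces to
\[
\frac{4\pi^2}{B}\int_{-B/2}^{B/2} f^2\cos(2\pi f\delta)\,df .
\]

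\textbf{Step 3 (evaluate the integral).} For $l_1=l_2$ we have $\delta=0$, and the integral is $\frac{4\pi^2}{B}\cdot\frac{B^3}{12}=\frac{\pi^2B^2}{3}$. This gives the diagonal entry $\frac{2\pi^2B^2|h_{l_1}|^2}{3N_0}$.

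For $l_1\neq l_2$, set $a=2\pi\delta$ and integrate by parts twice. The antiderivative is
\[
\frac{f^2\sin(af)}{a}+\frac{2f\cos(af)}{a^2}-\frac{2\sin(af)}{a^3}.
\]
Evaluate it at $\pm B/2$ and write $x=\pi B\delta$, so that $aB/2=x$. After multiplying by $4\pi^2/B$ this yields
\[
\frac{\pi B\sin x}{\delta}+\frac{2\cos x}{\delta^2}-\frac{2\sin x}{\pi B\delta^3}
=\frac{1}{\pi B\delta^3}\big[x^2\sin x+2x\cos x-2\sin x\big].
\]
Inserting this into Step~1 gives the off-diagonal entry in \eqref{eq_FIM_tau}.

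\textbf{Main obstacle.} The main difficulty is bookkeeping rather than conceptual. It lies in Step~1: fixing the factor $2/N_0$ for complex baseband noise, and justifying that the finite window $(0,T_{\text{ob}})$ can be replaced by the whole line, so that Parseval applies exactly. I will state these modelling assumptions explicitly. As a consistency check, I will confirm that the off-diagonal expression tends to the diagonal value as $\delta\to0$, using the Taylor expansion $x^2\sin x+2x\cos x-2\sin x\approx x^3/3$.
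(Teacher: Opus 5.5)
Your proposal is correct and follows the paper's proof essentially step for step. The AWGN log-likelihood gives $\frac{2}{N_0}\Re\{h_{l_1}^*h_{l_2}\int \dot s^*(t-\tau_{l_1})\dot s(t-\tau_{l_2})\,dt\}$, Parseval turns this into $\int (2\pi f)^2|S(f)|^2e^{j2\pi f\delta_{l_1l_2}}\,df$, and the flat PSD reduces it to an elementary integral. The only cosmetic difference is that you evaluate $\int_{-B/2}^{B/2} f^2\cos(2\pi f\delta)\,df$ by integrating by parts twice, whereas the paper writes it as $-\frac{1}{(2\pi)^2}\frac{d^2}{d\Delta^2}$ applied to $\frac{\sin(2\pi a\Delta)}{\pi\Delta}$; both give the same closed form after setting $a=B/2$, and your $\delta\to 0$ consistency check is a useful addition.
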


\begin{proof}
Refer to Appendix \ref{section_FIM_first_arriving_path} for proof.
\end{proof}
It should be observed that the off-diagonal elements of the FIM in \eqref{eq_FIM_tau} depend on the path overlap parameter $\delta_{l_1,l_2}, l_1,l_2\in \mathcal{L}$, whereas the diagonal elements remain independent of it. For the diagonal terms, the ratio of the path gain to the noise power can be interpreted as the SNR of the $l^{\text{th}}$ path. We can thus express the $l^{\text{th}}$ diagonal entry as  
\[
    [\bm{\mathcal{I}_{\bm{\tau}}}]_{l,l} = \frac{2\pi^2B^2\text{SNR}_l}{3}, \quad l \in \{1,\dots,L\}.
\]

Next, we partition the delay parameters as $\bm{\tau}=[\bm{\tau}_1^T,\bm{\tau}_2^T]^T$ where $\bm{\tau}_1=\tau_1$ is the delay of the diffraction path which we are interested in and $\bm{\tau}_2=[\tau_2,\dots,\tau_L]^T$ as the delays of the other $L-1$ MPCs which are considered as nuisance parameters. The corresponding FIM can be partitioned as
\[
\bm{\mathcal{I}_{\bm{\tau}}} =
\begin{bmatrix}
\bm{\mathcal{I}_{\tau_1, \tau_1}} & \bm{\mathcal{I}_{\tau_{2:L},\tau_1}^T} \\
\bm{\mathcal{I}_{\tau_{2:L},\tau_1}} & \bm{\mathcal{I}_{\tau_{2:L}, \tau_{2:L}}}
\end{bmatrix}
\]
where 
\[
\bm{\mathcal{I}_{\tau_1,\tau_1}} \in \mathbb{R}, \bm{\mathcal{I}_{\tau_1,\tau_{2:L}}} \in \mathbb{R}^{1\times L-1}, \bm{\mathcal{I}_{\tau_{2:L},\tau_{2:L}}} \in \mathbb{R}^{(L-1) \times (L-1)}.
\]
The EFIM $\bm{\mathcal{I}_{\tau_1}^{[e]}}$ corresponding to the delay of the diffraction path can now be expressed, using Definition~\ref{def_EFIM}, as
\[
\bm{\mathcal{I}_{\tau_1,\tau_1}^{[e]}} =
\bm{\mathcal{I}_{\tau_1,\tau_1}} -  \bm{\mathcal{I}_{\tau_2,\tau_L}^{[l]}}
\]
where,
\[
\bm{\mathcal{I}_{\tau_2,\tau_L}^{[l]}} =
\bm{\mathcal{I}_{\tau_1,\tau_{2:L}}} \bm{\mathcal{I}_{\tau_{2:L},\tau_{2:L}}^{-1}} \bm{\mathcal{I}_{\tau_1,\tau_{2:L}}^T}
\]
and this represents the loss in information due to overlapping paths.
As the delay separation between MPCs $l_1$ and $l_2$ increases, i.e., as $\delta_{l_1,l_2} \to \infty$, the CRLB obtained using Definition~\ref{def_crlb} for estimating the range of the diffraction path converges to the corresponding single-path bound. This occurs because the information loss due to path overlap vanishes in the large-separation regime. Thus, when the MPCs are effectively resolvable, the variance of the ranging estimate for the diffraction path between the
$k^{\text{th}}$ anchor and the $n^{\text{th}}$ target satisfies the CRLB
\begin{align}
\label{eq_ranging_uncertainty_diffraction_path}
\operatorname{var}(\hat{r}_{k,n})
\;\ge\; \frac{1}{\lambda_{k,n}},
\qquad
\lambda_{k,n} \;=\; \frac{2\pi^{2} B^{2}\,\mathrm{SNR}_{k,n}}{3c^{2}}.
\end{align}
Here, $\lambda_{k,n}$ is the \emph{ranging information weight}. Empirically, evaluating the CRLB of the RMSE of the ranging error versus path overlap at
$50$, $100$, and $200\mathrm{MHz}$ in \figref{fig_crlb_tof_estimation} shows that a delay separation larger than
$1/B$ is typically sufficient to make the overlap-induced information loss negligible; i.e.,
\[
\lim_{\delta_{l_1,l_2} \gg \tfrac{1}{B}} \bm{\mathcal{I}_{\tau_2,\tau_L}^{[l]}} = 0.
\]

% As the MPCs $l_1$ and $l_2$ become well separated in delay i.e. ,$\delta_{l_1,l_2} \to \infty$, the CRLB (obtained using Definition~\ref{def_crlb}) for estimating the range of the diffraction path converges to that of the single-path case. Thus when path overlap is negligible, the variance of the ranging estimate for the diffraction path between the
% $k^{\text{th}}$ anchor and the $n^{\text{th}}$ target satisfies the CRLB
% \begin{align}
% \label{eq_ranging_uncertainty_diffraction_path}
% \operatorname{var}(\hat{r}_{k,n})
% \;\ge\; \frac{1}{\lambda_{k,n}},
% \qquad
% \lambda_{k,n} \;=\; \frac{2\pi^{2} B^{2}\,\mathrm{SNR}_{k,n}}{3c^{2}}.
% \end{align}
% Here, $\lambda_{k,n}$ is the \emph{ranging information weight}. As the multipath
% components become well separated, the information loss due to path overlap
% vanishes. Empirically, evaluating the CRLB of the RMSE of the ranging error versus path overlap at
% $50$, $100$, and $200\,\mathrm{MHz}$ in \figref{fig_crlb_tof_estimation} shows that an overlap larger than
% $1/B$ is typically sufficient to treat this loss as negligible; i.e.,
% \[
% \lim_{\delta_{l_1,l_2} \gg \tfrac{1}{B}} \bm{\mathcal{I}_{\tau_2,\tau_L}^{[l]}} = 0.
% \]

\begin{figure}[!ht]
    \centering
   \includegraphics[width=0.8\linewidth]{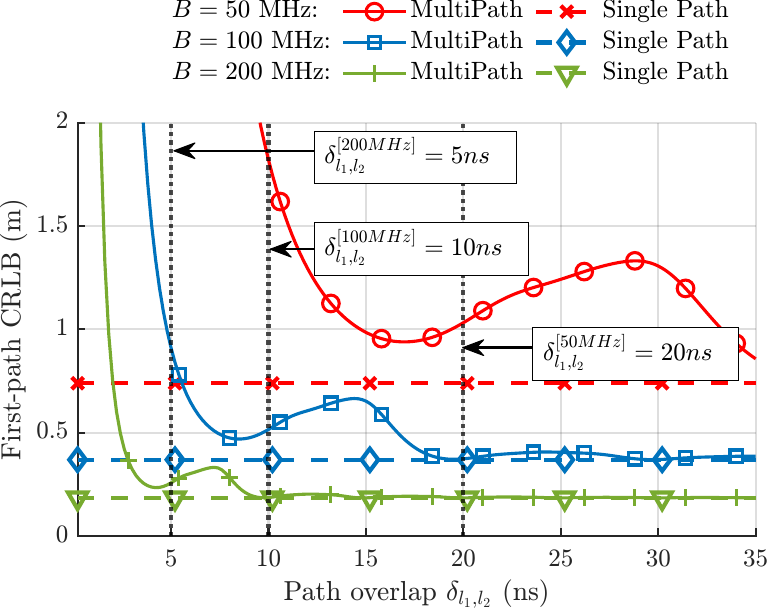}
  \caption{First-path CRLB for TOF-based ranging in a two-path multipath channel as a function of delay difference $\delta_{l_1,l_2}$. As the overlap increases, the multipath CRLB approaches the single-path bound. Larger bandwidths achieve this convergence at smaller overlaps; for $B=200$~MHz, the bound is nearly single-path limited at $\delta_{l_1,l_2}\approx 5$~ns, corresponding to about $1.5$~m of path-length separation in free space.}
  \label{fig_crlb_tof_estimation}
\end{figure}
Lemma 1 gives the conditional delay FIM given the path gains. In a full-parameter model, the gains are nuisance parameters; however, for resolvable wideband paths, the ranging information reduces to that of the isolated first path and depends only on its effective bandwidth and SNR \cite{shenetal}. Thus, \eqref{eq_ranging_uncertainty_diffraction_path} remains valid for the isolated diffraction path even with unknown complex gain. For nonresolvable MPCs, $\lambda_{k,n}$ should instead be computed from the full-parameter EFIM \cite{shenetal}.

\begin{remark}
\label{remark_resolvability_snr}
In the resolvable-path regime, i.e., $\delta_{l_1,l_2}\gg 1/B$, the ranging information for the isolated diffraction path depends only on its bandwidth and SNR, as captured by \eqref{eq_ranging_uncertainty_diffraction_path}. When paths overlap significantly, an overlap-aware EFIM is needed before generating the anchor-target dictionary in Remark~\ref{remark_anchor_node_dictionary}.
\end{remark}

% \begin{remark}
% \label{remark_resolvability_snr}
% Thus, for a fixed bandwidth that is sufficiently large relative to the inter-path delay separations, i.e., when $\delta_{l_1,l_2} \gg 1/B$, the MPCs become effectively resolvable. In this regime, the accuracy of TOF-based ranging estimation depends solely on the SNR of the diffraction path \eqref{eq_SNR_diffraction_path} and is unaffected by the presence of the other MPCs. Conversely, when the MPCs are not fully resolvable due to path overlap, an appropriate overlap-aware ranging-information model is needed to account for the resulting information loss when generating the anchor-target dictionary in Remark~\ref{remark_anchor_node_dictionary}.
% \end{remark}

% \begin{remark}
% \label{remark_resolvability_snr}
% Thus, for a sufficiently large fixed bandwidth, the accuracy of TOF-based ranging estimation depends solely on the SNR of the diffraction path \eqref{eq_SNR_diffraction_path} and is unaffected by the presence of the other MPCs. Conversely, in case there is path overlap, we need an appropriate path loss model that includes the effect of the path overlap on the SNR to generate the anchor-target dictionary in Remark~\ref{remark_anchor_node_dictionary}. 
% \end{remark}    

\vspace{-1em}

\section{Unified FIM for Mixed LOS/NLOS O2I Scenario}
In this section, we assume that the target’s $z$-coordinate, or equivalently its floor level, is known a priori (e.g., from barometric sensing or FirstNet \cite{Firstnetroadmap}), and focus on 2D in-floor localization using $K$ TOA-based range measurements. We first derive the FIM of the ranging measurements under the assumption that the diffraction path has been isolated from the received signal. We then use reparameterization to obtain the 2D localization FIM and the corresponding CRLB. Finally, we introduce an $(S_n,r_n)$ reformulation of this FIM, which reveals its geometric structure and underpins the single-target and multi-target anchor placement frameworks.

\subsection{FIM for LOS/NLOS Range Measurements for $K$ Anchors}

Let the $K$ ranging measurements between a fixed target $n$ and the $K$ anchors be collected in the vector $\bm{r}_n \in \mathbb{R}^K$, modeled as
\begin{equation}
    \bm{r}_n = \bm{p}_n + \bm{n}_n,
    \label{eq_range_measurement_model}
\end{equation}
where
\[
\bm{p}_n = [p_{1,n}, \dots, p_{K,n}]^T \in \mathbb{R}^K
\]
is the noiseless range vector, with $p_{k,n}$ denoting the diffraction-path length from the $k^{\mathrm{th}}$ anchor to target $n$, as given by \eqref{eq_diffraction_path_length_approx}. The vector $\bm{n}_n \in \mathbb{R}^K$ models the uncertainty in the TOF-based ranging measurements.

Under the assumed resolvability of the MPCs on each anchor--target link and the use of orthogonal waveforms across the $K$ anchors, we assume that an efficient TOF estimator is used for each link, so that the ranging error attains the corresponding CRLB. Under this assumption, the extracted ranging errors are modeled as Gaussian, and the covariance matrix of $\bm{n}_n$ is given by the diagonal matrix
\begin{align}
\bm{\Sigma}_n
= \mathbb{E}[\bm{n}_n \bm{n}_n^T]
= \operatorname{diag}\left(\tfrac{1}{\lambda_{1,n}}, \dots, \tfrac{1}{\lambda_{K,n}}\right),
\end{align}
where $\mathbb{E}[\cdot]$ denotes statistical expectation, and $\lambda_{k,n}$, defined in \eqref{eq_ranging_uncertainty_diffraction_path}, is the reciprocal of the CRLB on the variance of the $k^{\text{th}}$ ranging measurement. Thus, each diagonal entry of $\bm{\Sigma}_n$ represents the minimum achievable ranging-error variance for the corresponding anchor--target link under the adopted signal model.

Since the Gaussian likelihood associated with \eqref{eq_range_measurement_model} is parameterized by the mean vector $\bm{p}_n$, the FIM corresponding to the range parameter vector $\bm{p}_n$ is the inverse of the covariance matrix, given by
\begin{align}
    \bm{\mathcal{I}}_{\bm{p}_n}
    = \bm{\Sigma}_n^{-1}
    = \operatorname{diag}\left(\lambda_{1,n}, \dots, \lambda_{K,n}\right)
    \in \mathbb{R}^{K \times K}.
    \label{eq_FIM_ranging_measurements}
\end{align}
This matrix quantifies the information carried by the $K$ LOS/NLOS range measurements before reparameterizing the model in terms of the target location.

\subsection{2D Localization FIM in LOS/NLOS}
\label{section_2DFIM_formulation}

Although the targets lie in 3D space across different building floors, we assume that the target floor is known (e.g., from a barometer). Hence, for each target $n$, only the 2D in-floor position
\[
\bm{N}_n = [x_n,y_n]^T \in \mathbb{R}^2
\]
is estimated. Let
\[
\bm{p}_n(\bm{N}_n) = [p_{1,n},\ldots,p_{K,n}]^T \in \mathbb{R}^K
\]
denote the noiseless LOS/NLOS range vector. Since the Gaussian likelihood in \eqref{eq_range_measurement_model} is parameterized by $\bm{p}_n$, we apply Definition~\ref{def_FIM_transformation} with $\bm{\eta}=\bm{p}_n$ and $\bm{\xi}=\bm{N}_n$. The FIM for the $n^{\mathrm{th}}$ target location is therefore
\begin{align}
\label{eq_FIM_position_transformed_from_FIM_ranging_measurements}
\bm{\mathcal{I}}_{n}^{\langle 2\mathrm{D}\rangle}
=
\bm{J}_n^T \bm{\mathcal{I}}_{\bm{p}_n} \bm{J}_n,
\end{align}
where $\bm{\mathcal{I}}_{\bm{p}_n}$ is given by \eqref{eq_FIM_ranging_measurements}, and
\begin{align}
\bm{J}_n
=
\frac{\partial \bm{p}_n}{\partial \bm{N}_n}
=
\begin{bmatrix}
\bm{g}_{1,n}^T \\
\vdots \\
\bm{g}_{K,n}^T
\end{bmatrix}
\in \mathbb{R}^{K\times 2}
\end{align}
is the Jacobian of the noiseless range vector with respect to the 2D target position. Here,
\begin{align}
\bm{g}_{k,n}
=
\begin{bmatrix}
\frac{\partial p_{k,n}}{\partial x_n} \\
\frac{\partial p_{k,n}}{\partial y_n}
\end{bmatrix}
=
\left[
\frac{x_n-x_k}{p_{k,n}},
\frac{\sqrt{y_k^2+(z_k-z_n)^2}+y_n}{p_{k,n}}
\right]^T.
\label{eq_2D_information_vectors}
\end{align}

Substituting \eqref{eq_FIM_ranging_measurements} into \eqref{eq_FIM_position_transformed_from_FIM_ranging_measurements}, the 2D localization FIM can be written as
\begin{align}
\label{eq_2D_FIM_rank_1_outer_product}
\bm{\mathcal{I}}_{n}^{\langle 2\mathrm{D}\rangle}
=
\sum_{k=1}^{K}\lambda_{k,n}\,\bm{g}_{k,n}\bm{g}_{k,n}^T.
\end{align}
Thus, each anchor contributes a rank-one information matrix weighted by its ranging-information weight $\lambda_{k,n}$.

To estimate $\bm{N}_n$ uniquely, we require
\[
\rank(\bm{J}_n)=\dim(\bm{N}_n)=2.
\]
Equivalently, the information vectors $\{\bm{g}_{k,n}\}_{k=1}^K$ must span $\mathbb{R}^2$.

\begin{remark}
For the FIM to be full rank, at least $K=2$ anchors are required, and their information vectors $\bm{g}_{k,n}$ must be linearly independent. Equivalently, two suitably placed anchors are sufficient to estimate the unknown 2D position of a target on a known building floor.
\end{remark}

\subsection{$S_n,r_n$ FIM Reformulation}
\label{subsection_Sn_rn_reformulation}
We begin with the observation that \eqref{eq_2D_information_vectors} expresses the 2D FIM as a sum of scaled rank-one outer products of unit-norm vectors. Hence, each information vector can be parameterized by an angle $\psi_{k,n}$ as
\begin{equation}
\bm{g}_{k,n}
=
\begin{bmatrix}
\cos\psi_{k,n}\\
\sin\psi_{k,n}
\end{bmatrix}.
\label{eq_unit_vector_angle_parameterization}
\end{equation}

This parameterization is not unique to our path model but also applies to other 2D localization modalities; see Appendix~\ref{appendix_generality_Sn_rn} for more details.

For the proposed path model, the angle $\psi_{k,n}$ is given by
\begin{equation}
\label{eq_psi_definition}
\psi_{k,n}
=
\arctan2\!\left(
\sqrt{y_k^2+(z_k-z_n)^2}+y_n,\,
x_n-x_k
\right).
\end{equation}

Let $\lambda_{k,n}$ denote the information weight defined in \eqref{eq_ranging_uncertainty_diffraction_path}. For later use, define
\begin{equation}
\begin{aligned}
\label{eq_definition_perimeter}
S_n &\triangleq \sum_{k=1}^K \lambda_{k,n}, \\
u_n &\triangleq \sum_{k=1}^K \lambda_{k,n}\cos(2\psi_{k,n}), \\
v_n &\triangleq \sum_{k=1}^K \lambda_{k,n}\sin(2\psi_{k,n}),\\
r_n &\triangleq \sqrt{u_n^2+v_n^2}
= \left|\sum_{k=1}^K \lambda_{k,n}e^{j2\psi_{k,n}}\right|. 
\end{aligned}
\end{equation}

Here, $S_n$ represents the aggregate information strength, while $r_n$ captures the geometry induced by the anchor configuration. In our ranging model, $S_n$ is positively correlated with the received SNR through the weights $\lambda_{k,n}$.

Substituting \eqref{eq_unit_vector_angle_parameterization} into \eqref{eq_2D_FIM_rank_1_outer_product} and using the double-angle identities gives
\begin{equation}
\label{eq_2D_FIM_perimeter_form}
\bm{\mathcal{I}}_{n}^{\langle 2\mathrm{D}\rangle}
=
\frac{1}{2}
\begin{bmatrix}
S_n+u_n & v_n \\
v_n & S_n-u_n
\end{bmatrix}.
\end{equation}
As will become clear in the subsequent sections, this representation naturally leads to the polygon-closure interpretation through the quantities $S_n$ and $r_n$.

Assuming $\bm{\mathcal{I}}_{n}^{\langle 2\mathrm{D}\rangle}\succeq \mathbf{0}$, its eigenvalues are
\begin{equation}
\label{eq_2D_FIM_perimeter_form_eigenvalues}
\mu_{\pm}=\frac{S_n\pm r_n}{2},
\end{equation}
and therefore the eigenvalues of the corresponding CRLB matrix are
\begin{equation}
\label{eq_2D_CRLB_perimeter_form_eigenvalues}
\mu_{\pm}^{-1}=\frac{2}{S_n\pm r_n}.
\end{equation}
This form will be used extensively in the subsequent optimality analysis.

\section{Anchor Optimality for a Single Target}
In this section, we study anchor optimality for a single target using the \((S_n,r_n)\)-based FIM reformulation developed in the previous section. Beyond its independent theoretical interest, the single-target setting serves as a simple example problem that helps build intuition for the more general multi-target framework developed later. To expose the underlying geometric structure, we temporarily assume distance-independent path loss, so that the ranging-information weights \(\lambda_{k,n}\) are fixed and do not depend on the anchor locations, as in \cite{sadeghi2020optimal}. Under this assumption, the aggregate information term \(S_n\) is fixed, and the A-, D-, and E-optimality criteria reduce to functions of the residual term \(r_n\). This leads directly to a polygon-closure interpretation in the complex plane and clarifies when the three criteria coincide. In the subsequent sections, we relax this assumption and return to the general multi-target setting with distance-dependent path loss.

\subsection{A-, D- and E-Optimality Objectives}
\label{section_optimality_objectives}
% --- Confidence ellipsoid from the FIM ---
For a FIM \(\mathcal{I}\in\mathbb{R}^{d\times d}\),
the CRLB on the error covariance is \(\mathcal{I}^{-1}\).
The quadratic form \(\boldsymbol{e}^\top \mathcal{I}\,\boldsymbol{e}\) is \(\chi^2_d\)-distributed \cite{kay1993fundamentals},
yielding the \((1-\alpha)\) confidence ellipsoid
\begin{equation}
\label{eq_error_ellipsoid}
\mathcal{E}
=\left\{\boldsymbol{e}\in\mathbb{R}^d:\;
\boldsymbol{e}^\top \mathcal{I}\,\boldsymbol{e}
\le \chi^2_{d,1-\alpha}\right\}.
\end{equation}
Let \(\{\mu_i\}_{i=1}^d\) be the eigenvalues of \(\mathcal{I}^{-1}\).
The ellipsoid’s principal axes align with the eigenvectors of \(\mathcal{I}^{-1}\),
and the semi-axis lengths are
\[
a_i=\sqrt{\mu_i\,\chi^2_{d,1-\alpha}},\qquad i=1,\dots,d.
\]
Here, \(\chi^2_{d,1-\alpha}\) is a scalar that scales the CRLB ellipsoid to contain the estimation error
with probability \(1-\alpha\) according to the confidence level $\alpha$. In 2D, \(\chi^2_{2,0.95}=5.991\) gives the 95\% confidence ellipse.

Optimal design criteria minimize scalar performance measures derived from the CRLB matrix \(\mathcal{I}^{-1}\), which represents the fundamental error covariance in localization, by appropriately selecting anchor positions.

\paragraph{\emph{A-optimality}} 
The A-optimal design criterion seeks to minimize the trace of the CRLB (i.e., the sum of the variances). This corresponds to minimizing the sum of the eigenvalues in \eqref{eq_2D_CRLB_perimeter_form_eigenvalues}, yielding
\begin{align}
\label{eq_A_optimal_objective}
\min \; \Phi_A =& \min \;\operatorname{tr}\!\left( \left(\bm{\mathcal{I}_{n}^{\langle2D \rangle}}\right)^{-1}\right) \\ \nonumber
=& \min \;(\mu_+^{-1} + \mu_-^{-1}) 
= \min \;\frac{4S_n}{S_n^2-r_n^2}.
\end{align}
This quantity represents the sum of the error variances along orthogonal directions, and its square root is known as the \emph{position error bound (PEB)}. The PEB provides a fundamental lower bound on the root-mean-square error (RMSE) of any unbiased localization estimator, thereby directly linking A-optimality to minimizing localization error in the RMSE sense.

\paragraph{\emph{D-optimality}} corresponds to minimizing the determinant of the CRLB matrix or minimizing the volume of the positioning error ellipsoid (2D ellipse in our case). Equivalently, this is the product of the eigenvalues of the CRLB matrix in \eqref{eq_2D_CRLB_perimeter_form_eigenvalues}:
\begin{align}
\label{eq_D_optimal_objective}
\min \; \Phi_D 
&= \min \;\det\!\left(\left(\bm{\mathcal{I}_{n}^{\langle2D \rangle}}\right)^{-1}\right) \\ \nonumber
&= \min \;(\mu_+^{-1}\mu_-^{-1}) 
= \min \;\frac{4}{S_n^2 - r_n^2}.
\end{align}

This criterion seeks to minimize the volume of the error ellipsoid in \eqref{eq_error_ellipsoid}. In localization terms, this criterion reduces the overall volume of the confidence region.

\paragraph{\emph{E-optimality}} seeks to minimize the maximum eigenvalue of the CRLB matrix in \eqref{eq_2D_CRLB_perimeter_form_eigenvalues} and yields  
\begin{align}
\label{eq_E_optimal_objective}
\min \Phi_E
&= \min \; \operatorname{\text{eig}}_{\max}\!\left(\left(\bm{\mathcal{I}_{n}^{\langle 2\mathrm{D}\rangle}}\right)^{-1}\right) \\ \nonumber
&= \min \; \max\{\mu_+^{-1},\, \mu_-^{-1}\} \\ \nonumber
&= \min \; \frac{2}{S_n - r_n}
\end{align}
  
which corresponds to reducing the longest axis of the error ellipsoid. This ensures that the localization error is not excessively large in any direction, thereby improving robustness by improving worst-case error orientations.  

\subsection{Optimization Formulation for a Single Target With Distance-Independent Path Loss}

\par
Let the position of the $k^{\text{th}}$ anchor be represented by $\bm{A}_k = [x_k, y_k, z_k]^T \in \mathbb{R}^3,\;k\in\mathcal{K}$. The set of optimal anchor positions for a fixed target $n$ is $\{\bm{A}_k^{\star}\}_{k\in\mathcal{K}}$ and can be obtained as 
\begin{align}
\label{eq_anchor_optimality_objective_general}
&\text{Optimal Solution:}&\{\bm{A}_k^{\star}\}_{k\in\mathcal{K}}  = \arg\min_{\{\bm{A}_k\}_{k\in\mathcal{K}}} \Phi_X\left(\{\bm{A}_k\}_{k\in\mathcal{K}}\right) \\  \nonumber
&\text{Optimal Objective:}&\Phi_X^{\star} = \Phi_X\left(\{\bm{A}_k^{\star}\}_{k\in\mathcal{K}}\right).  
\end{align}
Here, $\Phi_X$ denotes one of the A-, D-, or E-optimal objectives as defined in \eqref{eq_A_optimal_objective}, \eqref{eq_D_optimal_objective}, and \eqref{eq_E_optimal_objective}, respectively. The target location is assumed to be fixed and known, so the 2D FIM depends only on the $K$ anchor positions $\{\bm{A}_k\}_{k\in\mathcal{K}}$. the 2D FIM is given by
\begin{align}
\label{eq_2DFIM_anchor_pos}
\bm{\mathcal{I}_{n}^{\langle2D \rangle}\!\left(\{\bm{A}_k\}_{k\in\mathcal{K}}\right)} 
&= \sum_{k=1}^{K} \bm{\mathcal{I}_{k,n}\!\left(\bm{A}_k\right)}, \\ \nonumber
\bm{\mathcal{I}_{k,n}\!\left(\bm{A}_k\right)} 
&= \lambda_{k,n}\!\left(\bm{A}_{k}\right)\,
      \bm{g}_{k,n}\!\left(\bm{A}_{k}\right)\,
      \bm{g}_{k,n}^T\!\left(\bm{A}_{k}\right).
\end{align}
Here, $\bm{\mathcal{I}_{k,n}}\left(\bm{A}_k\right)$ is the Fisher information contribution towards target $n$ from the $k^{\text{th}}$ anchor and is determined by both the ranging information weight $\lambda_{k,n}$ in \eqref{eq_ranging_uncertainty_diffraction_path} and the column vector of the Jacobian of the diffraction path model $\bm{g}_{k,n}$ capturing the LOS/NLOS geometry.

Recall that under the assumptions of resolvability and fixed signal bandwidth as observed in remark \ref{remark_resolvability_snr}, $\lambda_{k,n}$ depends solely on the SNR between the anchor and the target, with the SNR being the only term influenced by the anchor position. We temporarily take $\lambda_{k,n}$ to be anchor-position invariant (distance-independent path loss) as in \cite{sadeghi2020optimal}. Hence, $\lambda_{k,n}\!\left(\bm{A}_k\right) = \lambda_{k,n}$. 

\begin{thm}
Under the distance-independent path loss assumption, the ranging information weights 
$\{\lambda_{k,n}\}_{k=1}^K$ are treated as constants. Consequently, the quantity 
$S_n$ in \eqref{eq_definition_perimeter}, signifying signal coverage, is fixed for the $K$ anchors irrespective of their locations. For the FIM defined in \eqref{eq_2DFIM_anchor_pos}, the corresponding lower 
bounds on the A-, D-, and E-optimality objectives for the $n^{\text{th}}$ target are
\begin{equation}    
\Phi_A^{\star} = \frac{4}{S_n}, 
\qquad 
\Phi_D^{\star} = \frac{4}{S_n^2}, 
\qquad 
\Phi_E^{\star} = \frac{2}{S_n},
\end{equation}
where $S_n = \sum_{k=1}^K \lambda_{k,n}$.
\end{thm}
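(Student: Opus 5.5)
The plan is to work entirely with the $(S_n,r_n)$ reformulation \eqref{eq_2D_FIM_perimeter_form}. That reformulation has already reduced the three objectives to closed-form functions of two scalars: $\Phi_A = 4S_n/(S_n^2-r_n^2)$ from \eqref{eq_A_optimal_objective}, $\Phi_D = 4/(S_n^2-r_n^2)$ from \eqref{eq_D_optimal_objective}, and $\Phi_E = 2/(S_n-r_n)$ from \eqref{eq_E_optimal_objective}.

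Under the distance-independent path loss assumption the weights $\lambda_{k,n}>0$ do not depend on $\{\bm{A}_k\}$. Hence $S_n=\sum_k\lambda_{k,n}$ is a constant, and the anchor positions enter only through the angles $\psi_{k,n}$, that is, only through $r_n$.

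The first step is to pin down the admissible range of $r_n$. By the triangle inequality applied to $r_n=\bigl|\sum_k\lambda_{k,n}e^{j2\psi_{k,n}}\bigr|$ we get $0\le r_n\le \sum_k\lambda_{k,n}|e^{j2\psi_{k,n}}| = S_n$. We restrict to configurations for which the FIM is nonsingular, which, in view of the rank condition in Section~\ref{section_2DFIM_formulation}, means $r_n<S_n$. Equivalently, the smaller eigenvalue $\mu_-=(S_n-r_n)/2$ in \eqref{eq_2D_FIM_perimeter_form_eigenvalues} is positive, so all three objectives are finite and positive.

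The second step is a monotonicity check. With $S_n$ fixed, each of $4S_n/(S_n^2-r_n^2)$, $4/(S_n^2-r_n^2)$ and $2/(S_n-r_n)$ is strictly increasing in $r_n$ on $[0,S_n)$, because each denominator decreases as $r_n$ increases. Each objective is therefore bounded below by its value at $r_n=0$, which gives $\Phi_A\ge 4/S_n$, $\Phi_D\ge 4/S_n^2$ and $\Phi_E\ge 2/S_n$. These are exactly the stated bounds. Since the same minimizer $r_n=0$ works for all three criteria, the argument also shows that the three criteria share a common optimum whenever $r_n=0$ is attainable.

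The only genuinely delicate point is interpretation. The theorem asserts lower bounds, so the statement is complete once $r_n\ge 0$ is established. Whether the bounds are achieved, i.e.\ whether some choice of angles closes the polygon $\sum_k\lambda_{k,n}e^{j2\psi_{k,n}}=0$, is a separate question, and I expect that to be the real obstacle. It would be handled afterward using the generalized polygon inequality $\max_k\lambda_{k,n}\le S_n/2$ (that is, $\max_k \lambda_{k,n} \le \sum_{i\ne k}\lambda_{i,n}$), together with the observation that the map $\bm{A}_k\mapsto 2\psi_{k,n}$ through \eqref{eq_psi_definition} can realize the required angles. In the proof of this statement I would only remark that equality holds if and only if $r_n=0$, and defer attainability to that analysis.
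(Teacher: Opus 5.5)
Your proposal is correct and follows essentially the same route as the paper: hold $S_n$ fixed, note that each of $\Phi_A$, $\Phi_D$, $\Phi_E$ is strictly increasing in $r_n$ on $[0,S_n)$, and evaluate at $r_n=0$. Two of your steps are actually more careful than the paper's proof. First, you justify $0\le r_n\le S_n$ via the triangle inequality. Second, you separate the lower bound from its attainability; the paper tacitly assumes attainability when it says the minimum is attained at $r_n=0$, but this requires the polygon inequality of Lemma~\ref{lemma_necessary_sufficient_condition_polygon_construction}.
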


\begin{proof}
Starting from the A-, D-, and E-optimality objectives in 
\eqref{eq_A_optimal_objective}, \eqref{eq_D_optimal_objective}, 
and \eqref{eq_E_optimal_objective}, note that $S_n$ is fixed and $S_n>0$. 
Each criterion $\Phi_X$ is strictly increasing in $r \in [0,S_n)$. 
Hence, the minimum of each objective is attained at the smallest 
admissible value of $r_n$, namely $r_n = 0$. Substituting $r_n=0$ 
into the respective expressions yields the desired result.
\end{proof}
\begin{remark}
Under the distance-independent path loss assumption we have all three optimal objectives attain their optimum at $r_n=0$. Consequently, any anchor geometry that optimizes one of the A-, D-, or E-optimality criteria by achieving $r_n=0$ will also optimize the other two. Note there may be multiple optimal solutions as explained in Section \ref{section_closed_polygon_condition_fixed_snr}.
\end{remark}

% \begin{proof}
% Note, the 2D information vector $\bm{g}_k$ defined in \eqref{eq_2D_information_vectors} is unit norm. Let the $x$ and $y$ components be referred to as $[\bm{g}_k]_x$ and $[\bm{g}_k]_y$. Now, we can rewrite $[\bm{g}_k]_x=\cos{\psi_k}$ and $[\bm{g}_k]_y=\sin{\psi_k}$, where $\psi_k$ is the angle if $\bm{g}_k$ was represented on the complex plane.
% Now, after some algebraic manipulations we have 
% \begin{align}
% \Tr\left( \bm{J}_{\mathrm{2D}}^{-1}   \right)  = 
%  \frac{\sum_{i=1}^{N}\lambda_i}{AB-C^2},
% \end{align}
% where,
% \[
% A = \sum_{k=1}^K\lambda_k\cos{\psi_k}^2, B = \sum_{k=1}^K\lambda_k\sin{\psi_k}^2, C = \sum_{k=1}^K\lambda_k \frac{\sin{2\psi_k}}{2}.
% \]
% To minimize this quantity we need to maximize the denominator $AB-C^2$ which forms our new objective. 
% Now our maximization objective can be rewritten as $D=AB-C^2$ over the angles $\psi_k$. From the identity $\cos{\psi_k}^2+\sin{\psi_k}^2=1$ we have
% \[
% A+B = \sum_{k=1}^{K}\lambda_k
% \]
% which is independent from the angles $\psi_i$. Note that to maximize $D$, we need to maximize $AB$ subject to $A+B=\sum_{i=1}^{N}\lambda_i$ whilst simultaneously minimizing $C^2$. This is achieved when
% \begin{align}
% \label{eq_optimality_conditions}
% A=B=\frac{1}{2}\sum_{i=1}^{N}\lambda_i, \quad C^2 = 0    
% \end{align}

% which yields $D_{\text{max}}=\frac{1}{4}\left(\sum_{i=1}^N\lambda_i\right)^2$ and this completes the proof.
% \end{proof}

\subsection{Conditions for Achieving Optimality: Polygon Closure}
\label{section_closed_polygon_condition_fixed_snr}

\begin{figure}[htbp]
    \centering
    % --- Subfigure 1 ---
    \begin{subfigure}[t]{0.24\textwidth}
        \centering
        \includegraphics[width=\linewidth]{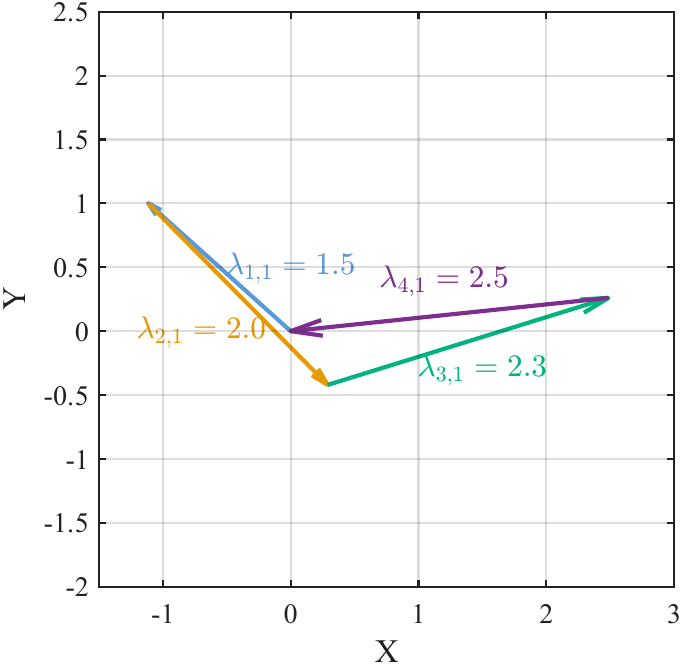}
        \caption{Closed Polygon construction}
        \label{fig_closed_polygon}
    \end{subfigure}
    \hfill
    % --- Subfigure 2 ---
    \begin{subfigure}[t]{0.24\textwidth}
        \centering
        \includegraphics[width=\linewidth]{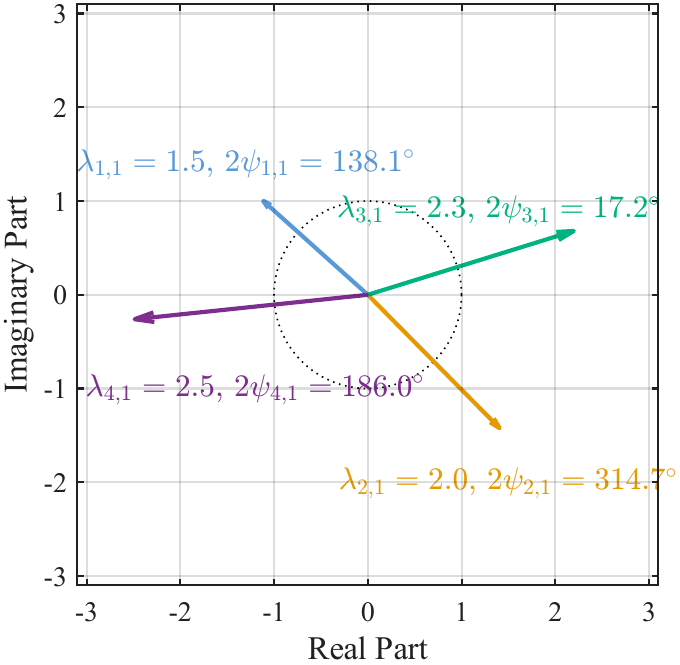}
        \caption{Complex Plane construction}
        \label{fig_complex_plane}
    \end{subfigure}
    \hfill

    % --- Main caption ---
\caption{For the fixed target \(n=1\) with ranging-information weights \(\{\lambda_{k,1}\}_{k\in\{1,2,3,4\}}=\{1.5,\,2,\,2.3,\,2.5\}\), the generalized triangle inequality in Lemma~\ref{lemma_necessary_sufficient_condition_polygon_construction} holds. Our construction therefore yields the polygon-closing angles \(2\psi_{1,1}=138.2^\circ\), \(2\psi_{2,1}=314.7^\circ\), \(2\psi_{3,1}=17.2^\circ\), and \(2\psi_{4,1}=186^\circ\), which simultaneously attain A-, D-, and E-optimality. The corresponding anchor locations are then recovered by solving the transcendental relation in \eqref{eq_psi_definition}, which arises from our mixed LOS/NLOS \((S_n,r_n)\)-based 2D FIM formulation. For the Euclidean ranging case, \cite{sadeghi2020optimal} derives a method which is effectively a specific polygon-closure procedure and is obtained by solving the corresponding transcendental equation from the \((S_n,r_n)\)-based 2D FIM for Euclidean ranging; see Appendix~\ref{appendix_generality_Sn_rn} for further details.}

    \label{fig_polygon_closure_problem}
\end{figure}

Observe that the condition $r_n=0$ directly implies $u_n=0$ and $v_n=0$ defined in \eqref{eq_definition_perimeter}. Equivalently, in terms of $\psi_{k,n}$ and $\lambda_{k,n}$ we get
\begin{align}
\label{eq_polygon_optimality_condition}
\Big\|\sum_{k=1}^K \lambda_{k,n} e^{j2\psi_{k,n}}\Big\|= 0. 
% \text{or}\\ \nonumber
% &\sum_{k=1}^K \lambda_{k,n} \sin(2\psi_{k,n}) = 0, \quad \sum_{k=1}^K \lambda_{k,n} \cos(2\psi_{k,n}) = 0.
\end{align}
Geometrically, the optimality condition can be interpreted in the complex plane as the construction of a closed $K$-sided polygon whose side lengths are given by $\{\lambda_{1,n},\dots,\lambda_{K,n}\}$. As an example we have explained this pictorially in \figref{fig_polygon_closure_problem}. This raises the natural question: for a fixed set of $\{\lambda_{k,n}\}$ corresponding to $K$ anchors, does there always exist a set of orientations $\{2\psi_{1,n},\dots,2\psi_{K,n}\}$ such that the condition in~\eqref{eq_polygon_optimality_condition} is satisfied? Equivalently, can one always construct a closed polygon in the complex plane with side lengths $\lambda_{k,n}$ and orientations $2\psi_{k,n}$? This problem is closely related to the generalized triangle inequality, which characterizes the necessary and sufficient conditions for the existence of such a polygon.

\begin{lemma}
\label{lemma_necessary_sufficient_condition_polygon_construction} Let $\{\lambda_{k,n}\}_{k \in \mathcal{K}}$ denote a set of nonnegative real numbers, derived from the ranging uncertainties in \eqref{eq_ranging_uncertainty_diffraction_path}, which represent the magnitudes of $K$ vectors in the complex plane $\mathbb{C}$. A necessary and sufficient condition for the existence of angles $\{\psi_k\}_{k\in\mathcal{K}}$ such that the vectors $\{\lambda_k e^{j2\psi_k}\}_{k\in\mathcal{K}}$ form a closed polygon is that the generalized polygon inequality holds:
\[
    \lambda_{k,n} \leq \sum_{i \in \mathcal{K}\setminus\{k\}} \lambda_{i,n},
    \quad \forall k \in \mathcal{K}.
\]
\end{lemma}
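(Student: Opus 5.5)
Necessity follows from the triangle inequality for complex numbers, and sufficiency from an explicit intermediate-value construction. Throughout, write $\lambda_k$ for $\lambda_{k,n}$ and $\theta_k \triangleq 2\psi_k$. The angles $\psi_k$ range over all of $\mathbb{R}$ and enter only through $e^{j2\psi_k}$, so any $\theta_k \in [0,2\pi)$ is admissible. Closure means $\sum_k \lambda_k e^{j\theta_k}=0$.

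\emph{Necessity.} Suppose closure holds. Isolating the $k$-th term gives $\lambda_k e^{j\theta_k} = -\sum_{i\neq k}\lambda_i e^{j\theta_i}$. Taking moduli and applying the triangle inequality yields $\lambda_k \le \sum_{i\neq k}\lambda_i$ for every $k$.

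\emph{Sufficiency.} Assume the inequality holds for all $k$, and relabel so that $\lambda_1=\max_k \lambda_k$. The degenerate cases are immediate:
\begin{itemize}
\item If all $\lambda_k=0$, any angles work.
\item If $K=1$, the inequality forces $\lambda_1=0$.
\item If $K=2$, it forces $\lambda_1=\lambda_2$; take $\theta_1=0$ and $\theta_2=\pi$.
\end{itemize}

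For the general case, fix $\theta_1=\pi$, so the first vector is $-\lambda_1$. We must realise $\lambda_1$ as the modulus of $w(\bm{\theta}) \triangleq \sum_{k\ge 2}\lambda_k e^{j\theta_k}$ with $w$ a positive real number. Consider the one-parameter family of configurations indexed by $t\in[0,1]$:
\begin{itemize}
\item $\theta_2=0$ and $\theta_k=t\pi$ for $k\ge 3$;
\item equivalently, $w(t)=\lambda_2+e^{jt\pi}\Lambda$, where $\Lambda\triangleq\sum_{k\ge3}\lambda_k$.
\end{itemize}
Then $|w(t)|$ is continuous in $t$, with $|w(0)|=\lambda_2+\Lambda=\sum_{k\ge2}\lambda_k\ge\lambda_1$ by the hypothesis for $k=1$. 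At the other end, $|w(1)|=|\lambda_2-\Lambda|$.

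We need $|\lambda_2-\Lambda|\le\lambda_1$.
\begin{itemize}
\item $\lambda_2-\Lambda\le\lambda_2\le\lambda_1$ by maximality of $\lambda_1$.
\item $\Lambda-\lambda_2\le\Lambda\le\lambda_1$ holds whenever $\Lambda\le\lambda_1$.
\end{itemize}
If instead $\Lambda>\lambda_1$, we reorder so that the "front" group is chosen greedily. Place $\lambda_1$ alone and split $\{2,\dots,K\}$ into two groups $G_1,G_2$ whose sums $a,b$ satisfy $|a-b|\le\lambda_1\le a+b$. Assign $G_1$ angle $0$ and $G_2$ angle $t\pi$, and the intermediate value theorem gives $t^\star$ with $|w(t^\star)|=\lambda_1$.

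The split exists by a simple greedy argument. Add the $\lambda_k$, $k\ge2$, one at a time to $G_1$ until $a$ first exceeds half the total $\sum_{k\ge2}\lambda_k$. Then $|a-b|$ is at most the last added element, which is $\le\lambda_1$.

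Finally, rotate all vectors $k\ge2$ by a common phase so that $w(t^\star)$ becomes the positive real $\lambda_1$. This cancels $-\lambda_1$ and closes the polygon.

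The main obstacle is this balancing step: ensuring $|a-b|\le\lambda_1$ using only maximality of $\lambda_1$. The greedy partition handles it, with care taken when weights are zero or when a single element exceeds half the remaining total.
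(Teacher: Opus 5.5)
The paper states this lemma without proof, invoking it as the classical generalized polygon inequality, so your argument has to stand on its own. Several parts of it are correct:
\begin{itemize}
\item the necessity direction;
\item the degenerate cases;
\item the reduction of sufficiency to a triangle with sides $\lambda_1$, $a$, $b$, closed by the intermediate value theorem on $|a+e^{jt\pi}b|$ followed by a common rotation of the vectors $k\ge 2$.
\end{itemize}

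\textbf{The gap is the partition step.} Let $T\triangleq\sum_{k\ge 2}\lambda_k$. Stopping at the first partial sum that exceeds $T/2$ does not give $|a-b|\le\lambda_{\mathrm{last}}$. Writing $a=a'+\lambda_{\mathrm{last}}$ with $a'\le T/2$, you only get $0<a-b=2a'-T+2\lambda_{\mathrm{last}}\le 2\lambda_{\mathrm{last}}$, which is off by a factor of two.

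Concretely, take $K=5$ and all $\lambda_k=1$:
\begin{itemize}
\item the hypothesis holds, and $\Lambda=3>\lambda_1=1$, so your greedy branch is invoked;
\item the partial sums are $1,2,3$, and the first one exceeding $T/2=2$ gives $a=3$, $b=1$;
\item then $|a-b|=2>\lambda_1$, so $|w(t)|$ ranges only over $[2,4]$ and never equals $\lambda_1$.
\end{itemize}
Using ``$\ge T/2$'' instead, or sorting the weights in decreasing order, does not rescue the rule. With remaining weights $0.9,0.9,0.2$ and $\lambda_1=1$, both variants give $a=1.8$, $b=0.2$ and $|a-b|=1.6>\lambda_1$.

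\textbf{The fix} is to allow the partial sum on either side of the crossing. Let $a_0=0,a_1,\dots,a_{K-1}=T$ be the partial sums and set $s_j\triangleq 2a_j-T$. The sequence $s_j$ increases from $-T$ to $T$ in steps $2\lambda_k\le 2\lambda_1$, so it cannot jump over the interval $[-\lambda_1,\lambda_1]$ of length $2\lambda_1$. Choosing $j$ with $|s_j|\le\lambda_1$ gives the required split; equivalently, take whichever bracketing partial sum is closer to $T/2$, which gives $|a-b|\le\lambda_{\mathrm{last}}\le\lambda_1$. This also makes your separate caveat about a single element exceeding half the total unnecessary.

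With this repair, $\lambda_1\le a+b=T$ is exactly the hypothesis for $k=1$, and the rest of your construction goes through.
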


Since our FIM formulation accounts for mixed LOS/NLOS conditions, the angles \(\psi_{k,n}\) are not the physical bearing angles considered in \cite{sadeghi2020optimal}, but rather the angles associated with the information vectors in our 2D FIM formulation; see Section~\ref{subsection_Sn_rn_reformulation} and Appendix~\ref{appendix_generality_Sn_rn}. For Euclidean ranging, however, \(\psi_{k,n}\) coincides with the bearing angle, and \cite{sadeghi2020optimal} provides a method for solving the resulting transcendental equation to recover anchor locations. Likewise, in our mixed LOS/NLOS setting, once a set of angles \(2\psi_{k,n}\) satisfying \eqref{eq_polygon_optimality_condition} is obtained, the corresponding anchor locations may be recovered in principle from \eqref{eq_psi_definition}. Because this equation is transcendental, multiple solutions may exist, so the optimal anchor locations are generally not unique.

 \vspace{-1em}

\section{Extension to the Multi-Target Setting}
Having established the single-target optimality conditions and the associated polygon-closure interpretation, we now turn to the multi-target setting. Here, we reintroduce distance-dependent path loss, so that the ranging-information weights \(\lambda_{k,n}\) depend on the anchor locations rather than remaining fixed constants. This makes the optimization problem in \eqref{eq_anchor_optimality_objective_general} inherently nonconvex in the continuous anchor-position variables \(\{\bm{A}_k\}_{k\in\mathcal{K}}\). To address this difficulty, we discretize the feasible anchor region and reformulate the problem as a combinatorial anchor-selection problem. This provides a tractable framework for extending single-target anchor optimality to region-wide design over multiple targets.
 
\vspace{-1em}

\subsection{Anchor Discretization and Combinatorial Reformulation}
Instead of trying to optimally place $K$ anchors outside the building, we sample the 3D space outside the building into $M$ points represented by the index set $\mathcal{M}\triangleq\{1,\cdots,M\}$. Each point $m\in\mathcal{M}$ represents a candidate anchor location and our optimization problem can be reformulated as selecting $K$ anchors from the candidate anchor set $\mathcal{M}$ such that our optimality criteria are met.
The anchor selection can be modeled using binary variables $x_m \in \{0,1\}$, where $x_m=1$ indicates that candidate anchor $m \in \mathcal{M}$ is chosen. 
For each target $n\in\mathcal{N}$ and candidate anchor location indexed by $m \in \mathcal{M}$, we can reformulate the coverage and the geometry term defined in \eqref{eq_2D_FIM_perimeter_form} for the discretized scenario as follows
\begin{align}
\label{eq_definition_perimeter_gap_definition}
S_n(\{x_m\}_{m\in\mathcal{M}}) &\triangleq \sum_{m \in \mathcal{M}} \lambda_{m,n} x_m, \\ \nonumber
r_n(\{x_m\}_{m\in\mathcal{M}}) &\triangleq \sqrt{u_n^2+v_n^2}
= \Bigg|\sum_{m \in \mathcal{M}} \lambda_{m,n} e^{j2\psi_{m,n}} x_m\Bigg|.
\end{align}
Observe in the above reformulation, $\lambda_{m,n}$ and $\psi_{m,n}$ are now fixed scalars that represent the ranging information weight between the $n^{\text{th}}$ target and $m^{\text{th}}$ candidate anchor location and our optimization variables are the binary variables $\{x_m\}_{m \in {M}}$. Moreover, $S_n$ is affine in the binary variables, while $r_n^2$ is quadratic in them.
With the coverage term and geometry terms defined in \eqref{eq_definition_perimeter_gap_definition}, we can now derive the 2D FIM for the discretized setting using \eqref{eq_2D_FIM_perimeter_form} and the eigenvalues of the CRLB matrix in \eqref{eq_2D_CRLB_perimeter_form_eigenvalues} to yield the three optimality objectives for each target $n\in\mathcal{N}$ as
\begin{align}
\label{eq_A_optimal_node_n}
\Phi_{A,n}^{\star} &= \min_{\{x_m\}_{m\in\mathcal{M}}}\; \frac{4S_n}{S_n^2-r_n^2} \equiv \max_{\{x_m\}_{m\in\mathcal{M}}}\; \frac{S_n^2-r_n^2}{4S_n}, \\
\label{eq_D_optimal_node_n}\
\Phi_{D,n}^{\star} &= \min_{\{x_m\}_{m\in\mathcal{M}}}\; \frac{4}{S_n^2-r_n^2}\equiv\max_{\{x_m\}_{m\in\mathcal{M}}}\; \frac{S_n^2-r_n^2}{4}, \\
\label{eq_E_optimal_node_n}
\Phi_{E,n}^{\star} &= \min_{\{x_m\}_{m\in\mathcal{M}}}\; \frac{2}{S_n-r_n}\equiv\max_{\{x_m\}_{m\in\mathcal{M}}}\; \frac{S_n-r_n}{2}
\end{align}
Note that under the continuous relaxation $x_m \in [0,1]$, the E-opt objective $S_n - r_n$ is concave, and the D-opt formulation is implemented by maximizing a quantity proportional to $\sqrt{\det(\mathbf{I}_n)}=\tfrac{1}{2}\sqrt{S_n^2-r_n^2}$, which is concave over $\mathbf{I}_n \succeq \mathbf{0}$. This structure is well suited to commercial MISOCP solvers using branch-and-bound; see Section
~\ref{section_branch_and_bound} for details.

\begin{remark}
\label{remark_anchor_node_dictionary}
We have the target locations fixed and candidate anchors sampled with a spatial resolution \(d\) from the feasible region (outside the building). We precompute \(\lambda_{m,n}\) and \(\psi_{m,n}\) for all pairs \((m,n)\) using \eqref{eq_ranging_uncertainty_diffraction_path} and \eqref{eq_psi_definition}. These fixed values form an \textbf{anchor–target dictionary}. This is fed as an input to the optimizer (solver) which then selects anchors \(\{x_m\}_{m\in\mathcal{M}}\) that optimize the objective within the feasible region. In essence, the dictionary captures geometry and signal coverage information, while the optimizer selects the best anchors.
\end{remark}

Discretizing the feasible region thus converts the continuous placement task into a combinatorial \emph{anchor selection problem}, in which the objective is to select $K$ anchors from $M$ candidates to optimize the chosen criterion across all $N$ fixed indoor targets.

\subsection{Multi-Target Objectives: Global and Min–Max Formulations}
\label{section_global_vs_minimax_objective_multi_node}
After the discretization step there are two ways to extend our single target objectives to multi-target. Recall, for a network of \(N\) targets, each target \(n\) is associated with a local 2D position estimation FIM denoted by 
\[\bm{\mathcal{I}_{n}^{\langle 2D \rangle}}\!\big(\{x_m\}_{m\in\mathcal{M}}\big)\in\mathbb{S}_{++}^{2\times2},\]
determined by the anchor configuration \(\{x_m\}_{m\in\mathcal{M}}\).  
The overall FIM for estimating the 2D position for all targets is block-diagonal and is given by
{\small
\begin{align}
    \bm{\mathcal{I}_{\mathrm{glob}}}\!\big(\{x_m\}_{m\in\mathcal{M}}\big)
    \;=\;
    \operatorname{blkdiag}\!\Big(
        \bm{\mathcal{I}_{1}^{\langle 2D \rangle}}\!,
        \dots,
        \bm{\mathcal{I}_{N}^{\langle 2D \rangle}}\!
    \Big),
\end{align}
}
whose inverse and determinant factor across targets as 
\[
\bm{\mathcal{I}_{\mathrm{glob}}^{-1}}
=\operatorname{blkdiag}\!\big((\bm{\mathcal{I}_{1}^{\langle 2D \rangle})^{-1}},\dots,(\bm{\mathcal{I}_{N}^{\langle 2D \rangle})^{-1}}\big)
\]
and  
\[\!\det \bm{\mathcal{I}_{\mathrm{glob}}}
=\prod_{n=1}^{N}\!\det \bm{\mathcal{I}_{n}^{\langle 2D \rangle}}.
\]
Accordingly, the classical \emph{global} optimal design criteria aggregate information across all targets:
{\small
\begin{align}
    \text{A-opt:}\quad 
        &\min_{\{x_m\}_{m\in\mathcal{M}}}\;
        \Phi_A^{\mathrm{glob}}
        =\operatorname{tr}\!\big(\bm{\mathcal{I}_{\mathrm{glob}}^{-1}}\big)
        =\sum_{n=1}^{N}\operatorname{tr}\!\big((\bm{\mathcal{I}_{n}^{\langle 2D \rangle})^{-1}}\big),
        \\[0.4em]
    \text{D-opt:}\quad 
        &\min_{\{x_m\}_{m\in\mathcal{M}}}\;
        \Phi_D^{\mathrm{glob}}
        =\det (\bm{\mathcal{I}_{\mathrm{glob}}^{-1}}) \\ \nonumber 
        &\qquad\qquad\qquad\;\;\;=\prod_{n=1}^{N}\!\det ((\bm{\mathcal{I}_{n}^{\langle 2D \rangle})^{-1}}),
        \\[0.4em]
        \label{eq_E-opt_global}
    \text{E-opt:}\quad 
        &\min_{\{x_m\}_{m\in\mathcal{M}}}\;
        \Phi_E^{\mathrm{glob}}
          =\operatorname{\text{eig}}_{\max}\!\big((\bm{\mathcal{I}_{\mathrm{glob}})^{-1}}\big) \\ \nonumber
        &\qquad\qquad\qquad\;\;\;=\max_{n}\operatorname{\text{eig}}_{\max}\!\big((\bm{\mathcal{I}_{n}^{\langle 2D \rangle})^{-1}}\big).
\end{align}
}
While the global A- and D-objectives minimize the \emph{average} or \emph{aggregate} estimation error across all targets, they do not explicitly control the weakest-performing target.  
To ensure fairness and robustness, we use a \emph{min–max} formulation instead. This optimizes the worst-case per-target objective and is given by
{\small
\begin{align}
    \text{A-opt (min–max):}\quad 
        &\min_{\{x_m\}_{m\in\mathcal{M}}}\;
        \max_{n}\operatorname{tr}\!\big((\bm{\mathcal{I}_{n}^{\langle 2D \rangle})^{-1}}\big),
        \\[0.3em]
    \text{D-opt (min–max):}\quad 
        &\min_{\{x_m\}_{m\in\mathcal{M}}}\;
    \max_{n}\!\det\!\big((\bm{\mathcal{I}_{n}^{\langle 2D \rangle})^{-1}}\big),
        \\[0.3em]
        \label{eq_min_max_E_opt}
    \text{E-opt (min–max):}\quad 
        &\min_{\{x_m\}_{m\in\mathcal{M}}}\;
        \max_{n}\operatorname{\text{eig}}_{\max}\!\big((\bm{\mathcal{I}_{n}^{\langle 2D \rangle})^{-1}}\big),
\end{align}
}
where the last expression \eqref{eq_min_max_E_opt} coincides exactly with the global E-optimal objective \eqref{eq_E-opt_global} because the smallest eigenvalue of the block-diagonal FIM equals the minimum of the smallest eigenvalues of its constituent blocks.  
Thus, E-optimality naturally enforces fairness across targets, whereas A- and D-optimality require explicit min–max reformulations to prevent sacrificing accuracy at poorly conditioned targets. Hence, for the rest of this paper, we adopt a min–max formulation that minimizes the maximum per-target optimality objective.

\vspace{-1em}

\subsection{Relationship Among A-, D-, and E-Optimal Objectives}

Recall that the per-target D-opt objective in \eqref{eq_D_optimal_node_n} and E-opt objective in \eqref{eq_E_optimal_node_n} were concave in the continuous relaxation of the optimization variables. In this section, we show that the E-optimal objective bounds the A-optimal objective, while the D-optimal objective is related to the
A-optimal objective through the aggregate information term \(S_n\). The following proposition formalizes these relationships among the A-, D- and E-optimal objectives.

\begin{proposition}[Relationship between A-, D- and E-optimality objectives]
\label{proposition_A_D_E_single_node}
For each target $n\in\mathcal{N}$, we have
\begin{enumerate}
\item[(i)] (\emph{E straddles A}) For all $n$,
\begin{equation}\label{eq:E-straddles-A}
    \Phi_{E,n} \;\le\; \Phi_{A,n} \;\le\; 2\,\Phi_{E,n}.
\end{equation}
\item[(ii)] (\emph{D--A scaling}) For all $n$,
\begin{equation}\label{eq:A-equals-S-times-D}
    \Phi_{A,n} \;=\; S_n\,\Phi_{D,n}.
\end{equation}
Thus, if $S_n$ is fixed across feasible designs, any design that
decreases $\Phi_{D,n}$ also decreases $\Phi_{A,n}$.
Moreover, if \(S_{\min}\le S_n \le S_{\max}\) over all feasible anchor selections,
where \(S_{\min}\) and \(S_{\max}\) bound the aggregate information
strength \(S_n\), then 
\begin{equation}\label{eq:S-bounded}
    S_{\min}\,\Phi_{D,n} \;\le\; \Phi_{A,n} \;\le\; S_{\max}\,\Phi_{D,n}.
\end{equation}
Hence, bounded $S_n$ yields only a multiplicative comparison bound, not
a monotonic guarantee.
\end{enumerate}
\end{proposition}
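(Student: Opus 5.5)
The plan is to work entirely in terms of the eigenvalues of the per-target FIM. By \eqref{eq_2D_FIM_perimeter_form_eigenvalues}, the FIM of target $n$ under any admissible selection $\{x_m\}$ has eigenvalues $\mu_\pm=(S_n\pm r_n)/2$, where $S_n,r_n$ are as in \eqref{eq_definition_perimeter_gap_definition}. We assume the FIM is positive definite, so $0\le r_n<S_n$ and $\mu_+\ge\mu_->0$. The per-design quantities from \eqref{eq_A_optimal_node_n}--\eqref{eq_E_optimal_node_n} are
\[
\Phi_{A,n}=\mu_+^{-1}+\mu_-^{-1}=\frac{4S_n}{S_n^2-r_n^2},\qquad
\Phi_{D,n}=\mu_+^{-1}\mu_-^{-1}=\frac{4}{S_n^2-r_n^2},\qquad
\Phi_{E,n}=\mu_-^{-1}=\frac{2}{S_n-r_n}.
\]
Every claim follows from these three expressions. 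No optimization is involved, because all three relations hold pointwise for each feasible design.

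For part (i), set $a=\mu_+^{-1}$ and $b=\mu_-^{-1}$, so that $0<a\le b$. Then $\Phi_{E,n}=b\le a+b=\Phi_{A,n}$, since $a>0$. Also $\Phi_{A,n}=a+b\le 2b=2\Phi_{E,n}$, since $a\le b$. As a consistency check in the $(S_n,r_n)$ variables, $\Phi_{A,n}/\Phi_{E,n}=2S_n/(S_n+r_n)$. This ratio lies in $(1,2]$: it equals $2$ exactly at $r_n=0$ and tends to $1$ as $r_n\to S_n$.

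For part (ii), the identity is immediate from the displays above: $\Phi_{A,n}=S_n\cdot 4/(S_n^2-r_n^2)=S_n\Phi_{D,n}$. Equivalently, $\operatorname{tr}(\mathcal I^{-1})=\operatorname{tr}(\mathcal I)/\det(\mathcal I)$ for a $2\times2$ matrix, and $\operatorname{tr}(\mathcal I_n^{\langle 2D\rangle})=S_n$ by \eqref{eq_2D_FIM_perimeter_form}. Two consequences follow:
\begin{itemize}
\item If $S_n$ is fixed, $\Phi_{A,n}$ is a positive constant multiple of $\Phi_{D,n}$, so the two are monotonically equivalent.
\item If $S_{\min}\le S_n\le S_{\max}$, multiplying the nonnegative quantity $\Phi_{D,n}$ through the bounds gives \eqref{eq:S-bounded}.
\end{itemize}
For the closing remark, note that two designs can have ordered $\Phi_D$ but different $S_n$, so their $\Phi_A$ ordering may flip. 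Only the sandwich bound survives.

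There is no real obstacle here; the main care point is justifying $r_n<S_n$. We need this so that $\Phi_{E,n}$ is the larger inverse eigenvalue and all quantities are finite. Note that $r_n\le S_n$ always holds by the triangle inequality applied to $\sum\lambda_{m,n}e^{j2\psi_{m,n}}x_m$. The strict inequality is exactly the full-rank condition from Section~\ref{section_2DFIM_formulation}, which we assume for every feasible selection.
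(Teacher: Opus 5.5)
Your proposal is correct and follows essentially the same route as the paper: the paper writes $\Phi_{A,n}=\frac{2}{S_n-r_n}+\frac{2}{S_n+r_n}=\Phi_{E,n}+\frac{2}{S_n+r_n}$, which is exactly your split $\mu_-^{-1}+\mu_+^{-1}$ with $\mu_+^{-1}\le\mu_-^{-1}$, and the paper treats (ii) as immediate from the definitions. Your explicit note that the strict inequality $r_n<S_n$ (full rank) keeps all three quantities finite is a small but useful tightening of the paper's non-strict condition $0\le r_n\le S_n$.
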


\begin{proof}
For (i), the A-opt per-target objective can be rewritten as
\[
\Phi_{A,n}
=\frac{4S_n}{S_n^2-r_n^2}
=\frac{2}{S_n-r_n}+\frac{2}{S_n+r_n}
=\Phi_{E,n}+\frac{2}{S_n+r_n},
\]
and since $0\le r_n\le S_n$ implies $S_n-r_n\le S_n+r_n$, we get
\[
\frac{2}{S_n+r_n}\le \frac{2}{S_n-r_n}=\Phi_{E,n}.
\]
Therefore,
\[
\Phi_{E,n}\;\le\;\Phi_{A,n}\;\le\;\Phi_{E,n}+\Phi_{E,n}
=2\,\Phi_{E,n},
\]
which proves \eqref{eq:E-straddles-A}.

For (ii), this follows directly from the definitions of A- and D-optimality in \eqref{eq_A_optimal_node_n} and \eqref{eq_D_optimal_node_n}.
\end{proof}

\begin{remark}[Extension to the multi-target case]
% Proposition~\ref{proposition_A_D_E_single_node} extends directly to the multi-target setting using the min--max formulation, by taking the maximum over all targets $n \in \mathcal{N}$. 
The D--A relation in Proposition~\ref{proposition_A_D_E_single_node} remains per-target; in the min--max multi-target setting, bounds on
\(S_n\) provide only multiplicative comparison bounds, not monotonic
guarantees.
\end{remark}
Now we bring all of these ideas together in two algorithms with exact D- and E-optimal objectives.

\subsubsection{\textbf{Algorithm 1: E-opt MISOCP}}
We seek to select exactly $K$ anchors such that the E-optimal objective in \eqref{eq_E_optimal_node_n} is maximized for the worst-performing target. In other words the E-optimal objective aims to maximize the difference between the polygon perimeter $S_n$ and the residual $r_n$. The selected anchors describe, for each target $n$, a $K$-sided polygon in the information domain. The resulting optimization problem can be formulated as

\begin{ieeeproblem}[breakable=false, width=1\columnwidth, fontupper=\small, left=4pt, right=4pt, top=4pt, bottom=4pt]{E-opt-MISOCP}
\begin{align}
&\max_{\{x_m\}_{m\in\mathcal{M}}} \;\;  \gamma \label{eq_soc_objective} \\
\text{s.t.}\;\;
& \sum_{m=1}^M x_m = K,\quad x_m\in\{0,1\}, \label{eq_card_gap_soc}\\
& S_n = \sum_{m=1}^M \lambda_{m,n} x_m,\quad 
  p_n = \sum_{m=1}^M \lambda_{m,n}\cos\!\big(2\psi_{m,n}\big)x_m, \nonumber\\
& q_n = \sum_{m=1}^M \lambda_{m,n}\sin\!\big(2\psi_{m,n}\big)x_m,\quad \forall n\in \mathcal{N}, \label{eq_S_pq_def_soc}\\
& r_n^2 \ge p_n^2 + q_n^2,\quad \forall n\in \mathcal{N}, \label{eq_soc_norm}\\
& \gamma \;\le\; S_n - r_n,\quad \forall n\in \mathcal{N}, \label{eq_gamma_gap_soc}\\
& r_n \ge 0,\;\; \gamma \ge 0,\quad \forall n\in \mathcal{N}. \label{eq_bounds_soc}
\end{align}
\end{ieeeproblem}

The objective is to maximize a scalar $\gamma$ that lower bounds the E-optimal criterion for every target $n\in\mathcal{N}$, as enforced by constraint \eqref{eq_gamma_gap_soc}. The optimization variables are the binary indicators $x_m\in\{0,1\}$, $m\in\mathcal{M}$, where $x_m=1$ indicates that candidate anchor $m$ is selected. Constraint \eqref{eq_card_gap_soc} imposes the cardinality requirement that exactly $K$ anchors are selected from the $M$ candidate locations. For each target $n\in\mathcal{N}$, the aggregate information strength $S_n$ is computed using \eqref{eq_S_pq_def_soc}, while the residual geometry term $r_n$ is defined through the second-order cone constraint \eqref{eq_soc_norm}. Constraint \eqref{eq_gamma_gap_soc} then ensures that $\gamma$ lower bounds the E-optimal objective in \eqref{eq_E_optimal_node_n} for all targets. Finally, constraint \eqref{eq_bounds_soc} enforces the nonnegativity of $\gamma$ and $r_n$.
 
Notice that the optimization problem features a linear objective and linear constraints, with the exception of \eqref{eq_soc_norm}, which imposes a convex quadratic constraint. This structure classifies the problem as a second-order cone program (SOCP). Furthermore, since the anchor selection variables $x_m$ are restricted to be binary, the resulting formulation is a mixed-integer second-order cone program (MISOCP).
 
\subsubsection{\textbf{Algorithm 2: D-opt MISOCP}}
From the D-optimality criterion in \eqref{eq_D_optimal_node_n}, maximizing $\tau$ is equivalent to maximizing the worst-case determinant of the FIM, across all targets, i.e.
\[
\det(\bm{\mathcal{I}_n}) \;\;\ge\;\; \frac{\tau^2}{4},\quad \forall n \in \mathcal{N}.
\]
Thus the D-optimal algorithm can be formulated as

\begin{ieeeproblem}[breakable=false, width=1\columnwidth, fontupper=\small, left=4pt, right=4pt, top=4pt, bottom=4pt]{D-opt-MISOCP}
\begin{align}
&\max_{\{x_m\}_{m\in\mathcal{M}}} \;\;  \tau \nonumber\\
\text{s.t.}\;\;
& \sum_{m=1}^M x_m = K,\quad x_m\in\{0,1\}, \label{eq_card_det_soc}\\
& S_n = \sum_{m=1}^M \lambda_{m,n} x_m,\quad 
  p_n = \sum_{m=1}^M \lambda_{m,n}\cos\!\big(2\psi_{m,n}\big)x_m, \nonumber\\
& q_n = \sum_{m=1}^M \lambda_{m,n}\sin\!\big(2\psi_{m,n}\big)x_m,\quad 
  \forall n\in \mathcal{N}, \label{eq_S_pq_def_det}\\
& r_n^2 \ge p_n^2 + q_n^2,\quad 
  \forall n\in \mathcal{N}, \label{eq_soc_norm_det}\\
& a_n = S_n - r_n,\quad 
  b_n = S_n + r_n,\quad 
  a_n,b_n \ge 0,\;\forall n\in\mathcal{N}, \label{eq_ab_def}\\
& \sigma_n^2 \le 2 a_n b_n,\quad 
  \sigma_n \ge 0,\;\forall n\in\mathcal{N}, \label{eq_rotated_soc}\\
& \sqrt{2}\;\tau \le \sigma_n,\quad 
  \forall n\in \mathcal{N}, \label{eq_tau_link}\\
& \tau \ge 0, r_n \ge 0. \label{eq_bounds_det}
\end{align}
\end{ieeeproblem}

The objective is to maximize a common scalar number $\tau$ over the indicator variables $x_m \in \{0,1\},\; m \in \mathcal{M}$. 
Constraint \eqref{eq_card_det_soc} enforces the cardinality condition, ensuring that exactly $K$ anchors are selected out of the available $\mathcal{M}$. 
For each target $n\in \mathcal{N}$, the perimeter $S_n$ and residual side length $r_n$ are defined in \eqref{eq_S_pq_def_det}--\eqref{eq_soc_norm_det}. 
The auxiliary variables in \eqref{eq_ab_def} satisfy
\[
a_n b_n = (S_n-r_n)(S_n+r_n)=S_n^2-r_n^2.
\]
Using the standard rotated SOC form, constraint \eqref{eq_rotated_soc} implies
\[
\sigma_n^2 \le 2a_n b_n = 2\bigl(S_n^2-r_n^2\bigr),
\]
or equivalently,
\[
\frac{\sigma_n}{\sqrt{2}} \le \sqrt{S_n^2-r_n^2}.
\]
Constraint \eqref{eq_tau_link} then imposes
\[
\sqrt{2}\,\tau \le \sigma_n,\qquad \forall n\in\mathcal{N},
\]
so that, together with \eqref{eq_rotated_soc},
\[
\tau \le \sqrt{S_n^2-r_n^2},\qquad \forall n\in\mathcal{N}.
\]
Since
\[
\det(\bm{\mathcal{I}_n})=\frac{1}{4}(S_n^2-r_n^2),
\]
it follows that
\[
\det(\bm{\mathcal{I}_n}) \ge \frac{\tau^2}{4},\qquad \forall n\in\mathcal{N}.
\]
Hence, maximizing $\tau$ is equivalent to maximizing the worst-case determinant across all targets.

The problem therefore has a linear objective and linear constraints, except for \eqref{eq_soc_norm_det} and \eqref{eq_rotated_soc}, which are convex quadratic constraints. This places the problem in the class of SOCP. Furthermore, because the anchor selection variables $x_m$ are binary, the final formulation is a MISOCP.

\subsection{Branch-and-Bound for MISOCP}
\label{section_branch_and_bound}
Mixed-integer formulations such as MISOCP are typically solved via branch-and-bound~\cite{land1960automatic,conforti2014integer}. The method branches on binary selection variables ($\{x_m\}_{m \in \mathcal{M}}$ in our case) to build a search tree; at each node, the binary constraints are relaxed to continuous values which yields a convex subproblem— an SOCP for MISOCP—whose optimal value provides a valid upper bound. We refer to the \emph{incumbent} as the best feasible solution found so far. These bounds drive pruning: any node whose relaxation cannot outperform the incumbent is discarded, so only promising regions of the search space are explored. 

\textbf{Optimality guarantees.} For the current discretized and sampled anchor-target dictionary, branch-and-bound certifies epsilon optimality once the gap between the incumbent and the best relaxation bound is at or below a user specified tolerance, assuming exact node relaxations and correct pruning or exhaustive enumeration~\cite{benson2016mixed}. In this setting, the certificates are: (i) \textbf{Feasibility} the incumbent satisfies all constraints including integrality; (ii) \textbf{epsilon optimality certificate} the incumbent objective matches or is within $\epsilon$ of the global bound from the relaxations, that is $\mathrm{gap} \triangleq (\mathrm{bound} - \mathrm{incumbent}) \le \epsilon$ with the usual sign convention for minimization or maximization; and (iii) \textbf{Infeasibility certificate} the search can prove that no feasible solution exists within the discretized sample region.

\subsection{Interpretation as a Polygon Closure Problem}
In the multi-target case, achieving optimality is no longer equivalent to closing the polygon for each target individually. 
For instance, under the E-optimal objective, maximizing the denominator \(S_n - r_n\) requires simultaneously maximizing 
the coverage term \(S_n\) and minimizing the residual \(r_n\). 
However, since all targets share the same set of anchors, it may not be possible to close the polygon for every target. 
Moreover, the polygon closure itself may not be achievable, as the generalized triangle inequality in Lemma \ref{lemma_necessary_sufficient_condition_polygon_construction} may not hold for 
anchors that yield the highest signal coverage. Observe in \figref{fig_polygon_closure_multi_target}, the optimizer balances the tradeoff between maximizing overall coverage $S_n$ and achieving favorable 
geometric configurations $r_n$ across all targets. 
Geometrically, for each target \(n \in \mathcal{N}\), \(S_n\) represents the \emph{perimeter} of a polygon with sides 
\(\{\lambda_{m,n} : x_m = 1\}\) oriented by \(\{2\psi_{m,n} : x_m = 1\}\) in the complex plane, 
while \(r_n\) denotes the \emph{residual}, i.e., the length of the polygon’s unclosed side.

\begin{figure}[!htbp]
    \centering
   \includegraphics[width=0.5\linewidth, trim=0mm 0mm 0mm 0mm, clip]{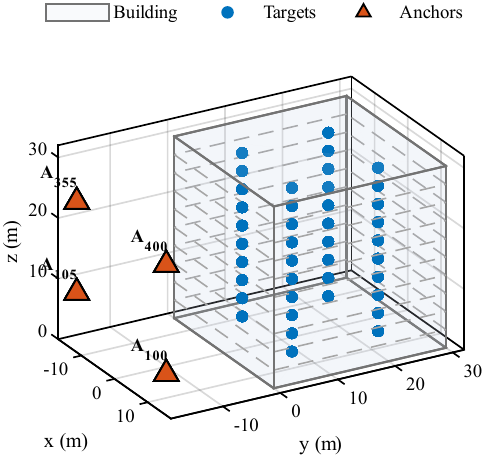}
  \caption{Optimal \(K=4\) anchor configuration returned by our optimization framework, namely E-opt-MISOCP, for the mixed LOS/NLOS scenario, using a candidate anchor dictionary sampled with \(d=4\) m. Under LOS/Euclidean path models, the usual intuition is to place anchors roughly around the building perimeter to obtain uniform angular coverage. In contrast, the resulting optimum here exhibits a rotated rhombus-like arrangement near one side of the building, reflecting the different placement geometry induced by the mixed LOS/NLOS setting. The numbers next to the anchors specify the anchor index selected from the anchor-target dictionary as the solution}
  \label{fig_spatial_plot}
\end{figure}

\begin{figure}[!htbp]
    \centering
   \includegraphics[width=\linewidth, trim=8mm 90mm 8mm 90mm, clip]{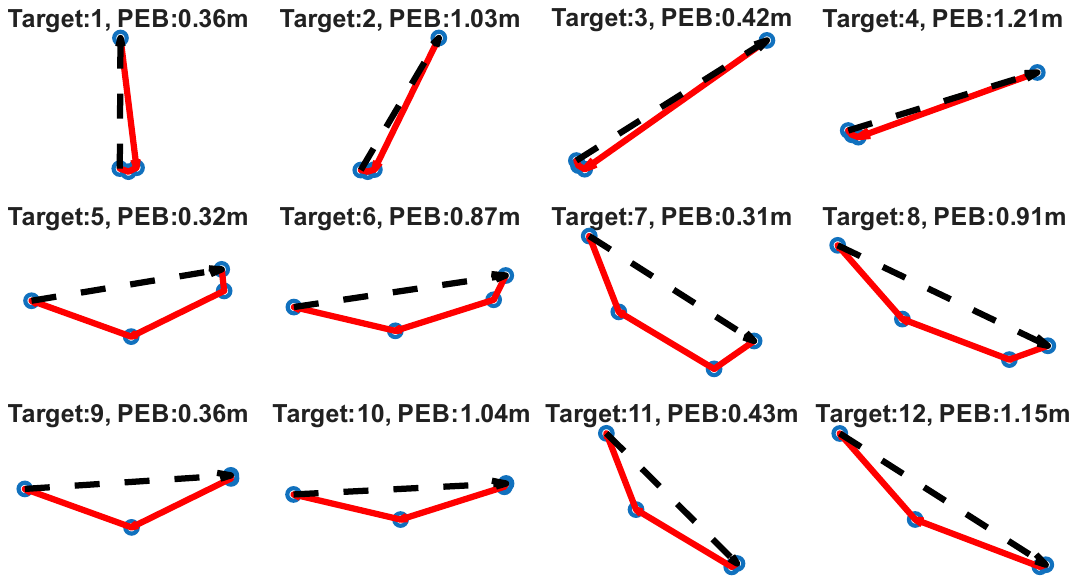}
  \caption{Polygon closure for \(12\) representative targets located on \(3\) of the building's \(10\) floors. The common \(4\)-anchor configuration is obtained by applying the E-opt-MISOCP jointly to all \(40\) targets; only three floors are shown for visual clarity. For each target, the formulation promotes near-closure of the corresponding polygon by reducing the residual \(r_n\), while simultaneously favoring a large aggregate information strength \(S_n\). Equivalently, it maximizes the worst-target value of \(S_n-r_n\), thereby improving worst-case localization performance and, consequently, the A-optimality metric (or PEB).}
\label{fig_polygon_closure_multi_target}
\end{figure}

\section{Results and Discussion}
\subsection{Performance Metrics}
\label{section_metrics}
Let \(\{x_m\}_{m\in\mathcal{M}}^\star\) denote the solution to the optimization problem. Define the worst-performing target 
\(n_{\mathrm{worst}}\in\mathcal{N}\). 
For this target, compute \(S_n\) and \(r_n\) via \eqref{eq_definition_perimeter_gap_definition}, and the corresponding CRLB matrix as
\[\bm{\Sigma}_{n_{\mathrm{worst}}}^{\star}\coloneqq \left(\bm{\mathcal{I}_{n_{worst}}^{\langle2D \rangle}}\right)^{-1}\]

from \eqref{eq_2D_FIM_perimeter_form}.
We report three scalar measures of localization accuracy:
{\small
\begin{align}
\label{eq_PEB_metrics}
\mathrm{PEB}_{n_{worst}}^{\star}      &= \sqrt{\operatorname{tr}\!\big(\bm{\Sigma}_{n_{\mathrm{worst}}}^{\star}\big)}, \\
\label{eq_CER_metrics}
\rho_{n_{worst}}^{\star}              &= \sqrt{5.991\,\sqrt{\det\!\big(\bm{\Sigma}_{n_{\mathrm{worst}}}^{\star}\big)}}, \\
\label{eq_MAD_metrics}
\sigma_{\max,n_{worst}}^{\star}     &= \sqrt{\lambda_{\max}\!\big(\bm{\Sigma}_{n_{\mathrm{worst}}}^{\star}\big)}.
\end{align}
}
\noindent
Here, $\mathrm{PEB}_{n^{\star}}$ (PEB) gives the root mean squared (RMS) position error from the CRLB via $\sqrt{\operatorname{tr}(\bm{\Sigma}_n^\star)}$;
$\rho_{n^{\star}}$ (confidence ellipsoid radius (CER)) is the radius of a $95\%$ equal-area circle for the error ellipse, using $\chi^2_{2,0.95}=5.991$; 
and $\sigma_{\max,n^{\star}}$ (maximum axis deviation (MAD)) is the worst-direction standard deviation error obtained from the largest eigenvalue of $\bm{\Sigma}_n^\star$.
All three have units of length and are scaled versions of the A-,D- and E-optimality criteria respectively.

\subsection{Simulation Methodology}
We assess performance via a Monte~Carlo study with \(100\) trials. Target locations are fixed. In each trial \(t\), we first form a uniform 3D grid of candidate anchors \(\{\bm{A}_m\}_{m\in\mathcal{M}}\) with spacing \(d\) meters along each of the \(x,y,z\) axes. We then apply a Cranley--Patterson random shift \cite{cranley1976randomization} by drawing a vector on every trial $t$ such that
\(\bm{u}^{(t)}=[u_x,u_y,u_z]^\mathsf{T}\) with \(u_i \overset{\text{i.i.d.}}{\sim} \mathcal{U}([-\tfrac{d}{2},\tfrac{d}{2}])\), shifting every grid point by \(\bm{u}^{(t)}\), and wrapping any out-of-domain points to within the feasible region. Using this candidate set, we precompute the anchor--target dictionary parameters \(\{\lambda_{m,n},\psi_{m,n}\}\), solve for the anchor selections \(\{x_m^{(t)}\}_{m\in\mathcal{M}}\), and compute the accuracy metrics. We report boxplots of each metric \eqref{eq_PEB_metrics},\eqref{eq_CER_metrics},\eqref{eq_MAD_metrics} for the worst-performing target across trials. The other simulation parameters are given in Table \ref{tab_system_params}.

\begin{table}[t]
\centering
\caption{System and Geometric Simulation Parameters}
\scriptsize
\begin{tabular}{lcl}
\toprule
\textbf{Parameter} & \textbf{Symbol} & \textbf{Value} \\
\midrule
\multicolumn{3}{l}{\textit{System Parameters}} \\[2pt]
Bandwidth & $B$ & $200~\text{MHz}$ \\
Transmit Power & $P_t$ & $30~\text{dBm}$ \\
Transmit Antenna Gain & $G_t$ & $10~\text{dBi}$ \\
Receive Antenna Gain & $G_r$ & $10~\text{dBi}$ \\
Carrier Frequency & $f_c$ & $10~\text{GHz}$ \\
Noise Figure & $F$ & $3~\text{dB}$ \\[4pt]
\multicolumn{3}{l}{\textit{Geometric Parameters}} \\[2pt]
Building parameter &L& $10$m\\
Number of Floors & $N_{\text{floor}}$ & $10$ \\
Floor Height & $h_{\text{floor}}$ & $3~\text{m}$ \\
Building Length & $L_x$ & $2L~\text{m}$ \\
Building Depth & $L_y$ & $3L~\text{m}$ \\
Target Positions & $\{\bm{N}_n\}_{n\in \mathcal{N}}$ & 4 fixed targets per floor \\
Anchor Feasible Region & $\{\bm{A}_m\}$ & 
$\substack{
[\text{-}L,\text{-3}L,\!\,0]^T
\le \bm{A}_m \le [L,0,3L]^T
}$ \\
Anchor sampling length &d& $\{5,4,3\}$m\\
Number of anchors &K&\{3,4,5\} \\
\bottomrule
\end{tabular}
\label{tab_system_params}
\end{table}

\subsection{Numerical Comparison of the Algorithms }
As a baseline, we use a greedy minimax heuristic with one-swap
local refinement over the precomputed anchor--target dictionary.
For each criterion, the algorithm first selects the best anchor
pair from the dictionary by exhaustive search. It then adds the
remaining anchors sequentially, each time selecting the candidate
that yields the smallest worst-case objective value across all
targets. After \(K\) anchors have been selected, a one-swap local
search is applied. In each iteration, one selected anchor is
replaced by an unselected candidate whenever the replacement
decreases the worst-case objective. This process is repeated until
no improving one-for-one replacement exists. The resulting three
baseline variants are denoted by GreedyMinimax-A-opt,
GreedyMinimax-D-opt, and GreedyMinimax-E-opt, corresponding to
the worst-case A-, D-, and E-optimal objectives, respectively.

We compare all algorithms using the three performance metrics defined in
\eqref{eq_PEB_metrics}--\eqref{eq_MAD_metrics}, namely the PEB, CER,
and MAD. We interpret these metrics from an optimal experimental design
perspective: the design variables are the anchor locations, and the goal
is to improve the FIM/CRLB induced by the selected anchors. Since PEB,
CER, and MAD are scaled versions of the A-, D-, and E-optimal
objectives, respectively, each algorithm is expected to perform best in
the metric corresponding to its design objective. Although these are
bound-based design metrics rather than outputs of a specific estimator,
companion estimator-level work for the same diffraction-based mixed
LOS/NLOS model demonstrates near-CRLB RMSE
performance~\cite{duggal2026newlocationestimatormixed}, supporting
their use as practical design targets.

Next, we evaluate the proposed methods in two ways. First, we vary the sampling distance $d\in\{5m,4m,3m\}$ of the anchor dictionary, which changes the size and resolution of the feasible candidate set and these results are shown in \figref{fig_positioning_results_vs_d}. Second, we vary the number of selected anchors, considering $K\in\{3,4,5\}$. The corresponding results for the $K$-sweep are shown in \figref{fig_positioning_results_vs_K}.

We first discuss the dependence on the sampling distance $d$. As shown in \figref{fig_CER_vs_d}, D-opt-MISOCP consistently achieves the lowest CER across all values of $d$, which is expected since CER is the metric corresponding to the D-optimal objective. Similarly, \figref{fig_MAD_vs_d} shows that E-opt-MISOCP consistently achieves the lowest MAD for every sampling distance, since MAD is the metric associated with the E-optimal objective. In contrast, PEB is not the exact objective optimized by either proposed MISOCP. Nevertheless, Proposition~\ref{proposition_A_D_E_single_node} shows that the E-optimal objective controls the A-optimal objective within a factor of two. 
Hence, the E-optimal criterion provides a certified surrogate for
improving A-opt performance, whereas the D-optimal criterion is only
multiplicatively related to PEB through \(S_n\). Improvements in
non-matched metrics are therefore empirical, not monotonic guarantees.
Next, we examine the effect of the number of selected anchors \(K\).
Increasing \(K\) generally improves performance, as reflected by lower
PEB, CER, and MAD for both algorithms. The gains beyond \(K=4\) are
smaller, suggesting that \(K=4\) provides a favorable
accuracy--complexity tradeoff in this setting. As in the $d$-sweep, each formulation performs best in the metric matched to its own design objective, with D-opt-MISOCP yielding the strongest CER performance in \figref{fig_CER_vs_k} and E-opt-MISOCP yielding the strongest MAD performance in \figref{fig_MAD_vs_k}, while both continue to provide competitive improvements in PEB as well in \figref{fig_PEB_vs_k}.

\subsection{Complexity Trends With Respect to $K$ and $d$}

After discretization, both E-opt-MISOCP and D-opt-MISOCP reduce to cardinality-constrained binary selection problems over the candidate-anchor set $\mathcal{M}$, in which exactly $K$ anchors are chosen from $M$ candidates. Both formulations are solved by branch-and-bound over the same binary decision variables $x_m$, $m \in \mathcal{M}$. A crude worst-case view therefore suggests exponential dependence on the number of candidate anchors $M$. However, in the present setting, a more informative measure of the search-space size is the number of feasible selections, namely $\binom{M}{K}$. Hence, for fixed sampling distance $d$ (and therefore fixed $M$), the computational burden grows combinatorially with $K$ and increases rapidly as more anchors are selected.

Next, for fixed $K$, the candidate anchors are generated by sampling a three-dimensional feasible region with spacing $d$. Ignoring boundary effects, the number of candidate anchors therefore scales as $M=\Theta(d^{-3})$. Hence, decreasing $d$ refines the feasible anchor dictionary and allows the solver to explore better configurations, but it also increases the number of binary decision variables and enlarges the underlying search space. As a result, the computational burden grows rapidly as $d$ decreases. Equivalently, since the dominant combinatorial term is $\binom{M}{K}$ and $M=\Theta(d^{-3})$, the complexity inherits this dependence through the growth of $M$.

Both formulations therefore share the same leading dependence on $K$ and $d$. The D-opt-MISOCP is typically somewhat heavier per node than the E-opt-MISOCP because it introduces additional auxiliary variables and an extra rotated second-order cone constraint, but this does not alter the dominant trend. In practice, branch-and-bound solvers use relaxations, pruning, incumbent heuristics, branching rules, and presolve, so these expressions should be interpreted only as rough worst-case indicators rather than exact runtime laws.
This trade-off is also reflected in the numerical results in \figref{fig_computation_d} and \figref{fig_computation_k}. Under the imposed $70$ s per-trial time limit, E-opt-MISOCP and D-opt-MISOCP are able to certify optimality for some trials when the dictionary is coarser (e.g., $d=5\,\mathrm{m}$). At finer sampling resolutions, the solvers may no longer certify optimality within the time limit; nevertheless, the incumbent solutions returned in practice remain high quality and consistently yield improved localization performance across all three metrics. Thus, although finer discretization increases computational burden, it still provides a practically useful accuracy-complexity trade-off. This trade-off between performance and computational tractability highlights the practical balance achieved by the proposed MISOCP formulations.

\begin{figure*}[htbp]
    \centering
    % --- Subfigure 1 ---
    \begin{subfigure}[t]{0.31\textwidth}
        \centering
        \includegraphics[width=\linewidth]{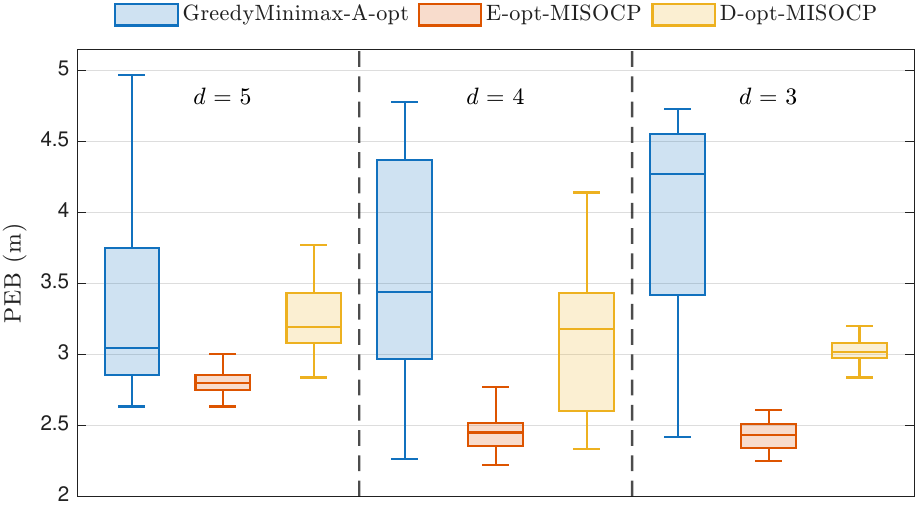}
        \caption{PEB plot}
        \label{fig_PEB_vs_d}
    \end{subfigure}
    \hfill
    % --- Subfigure 2 ---
    \begin{subfigure}[t]{0.31\textwidth}
        \centering
        \includegraphics[width=\linewidth]{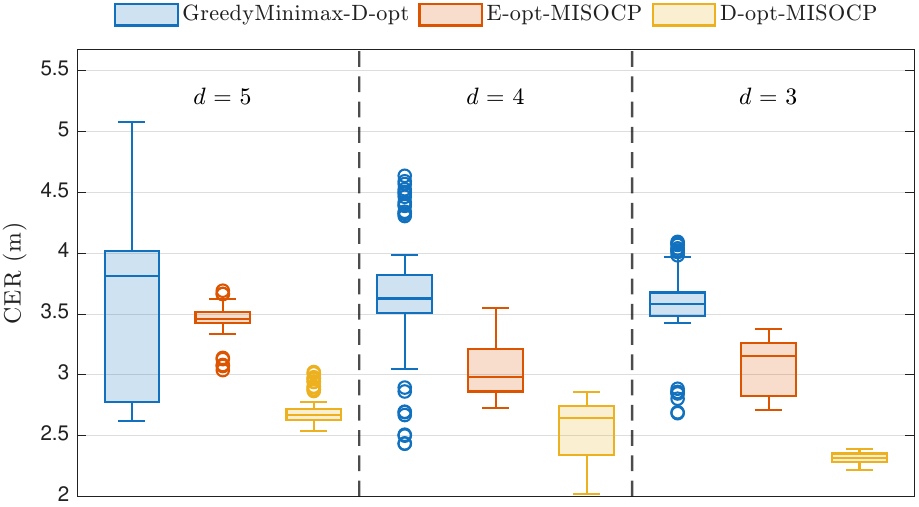}
        \caption{CER plot}
        \label{fig_CER_vs_d}
    \end{subfigure}
    \hfill
    % --- Subfigure 3 ---
    \begin{subfigure}[t]{0.31\textwidth}
        \centering
        \includegraphics[width=\linewidth]{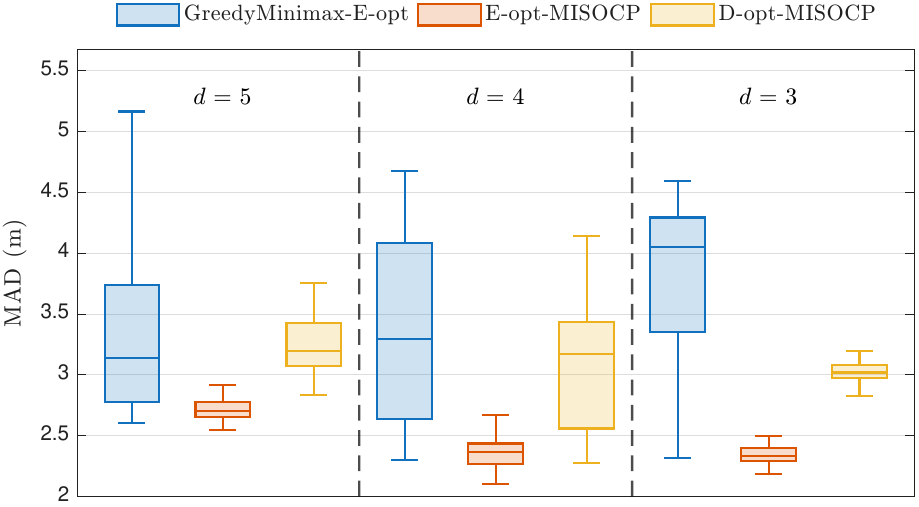}
        \caption{MAD plot}
        \label{fig_MAD_vs_d}
    \end{subfigure}

    % --- Main caption ---
   \caption{Localization performance of the E-opt-MISOCP, and D-opt-MISOCP formulations with \(K=4\) anchors versus various sampling distances \(d=3m ,4m ,5m \). Each boxplot summarizes variability across Monte Carlo trials: the horizontal line marks the median; the box spans the 25th--75th interquartile (IQR); whiskers extend to \(1.5\times\) IQR; and dots indicate outliers beyond the whiskers. Hence, shorter boxes and whiskers indicate tighter dispersion (greater robustness), while a lower median indicates better typical performance. For larger \(d\), the solver attains optimality within the 70\,s time limit; for denser dictionaries (\(d<5\) m), solutions are high quality but not provably optimal within the time limit. Among the objectives, D-opt-MISOCP achieves the lowest CER, E-opt-MISOCP attains the smallest MAD, and both also reduce PEB, consistent with the theoretical relationships among A-, D-, and E-opt criteria in Proposition~\ref{proposition_A_D_E_single_node}. The baseline performance is obtained using the GreedyMinimax heuristic with one-swap local refinement for each of the A-, D-,
and E-optimality criteria.}

    \label{fig_positioning_results_vs_d}
\end{figure*}

\begin{figure*}[htbp]
    \centering
    % --- Subfigure 1 ---
    \begin{subfigure}[t]{0.31\textwidth}
        \centering
        \includegraphics[width=\linewidth]{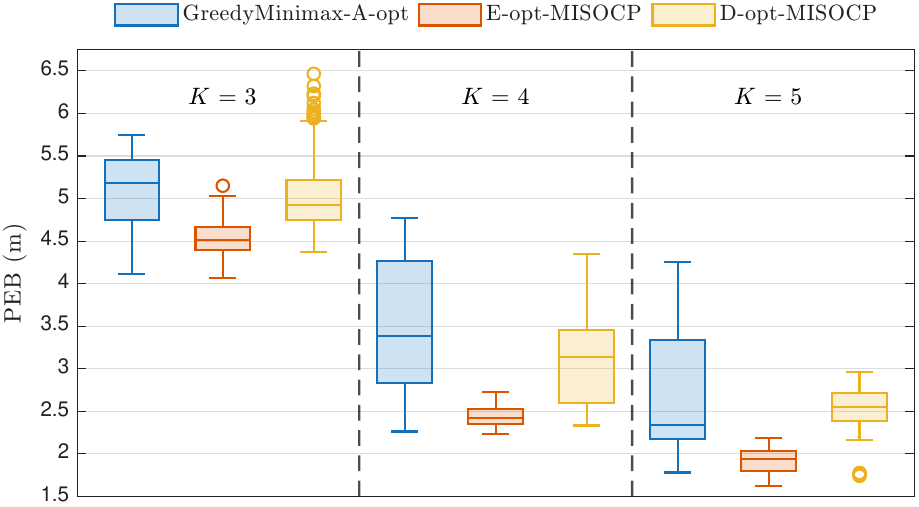}
        \caption{PEB plot}
        \label{fig_PEB_vs_k}
    \end{subfigure}
    \hfill
    % --- Subfigure 2 ---
    \begin{subfigure}[t]{0.31\textwidth}
        \centering
        \includegraphics[width=\linewidth]{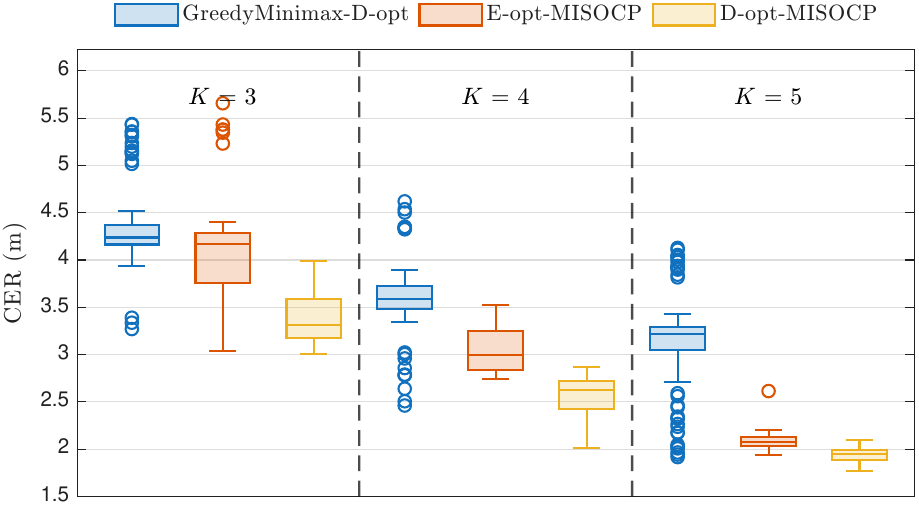}
        \caption{CER plot}
        \label{fig_CER_vs_k}
    \end{subfigure}
    \hfill
    % --- Subfigure 3 ---
    \begin{subfigure}[t]{0.31\textwidth}
        \centering
        \includegraphics[width=\linewidth]{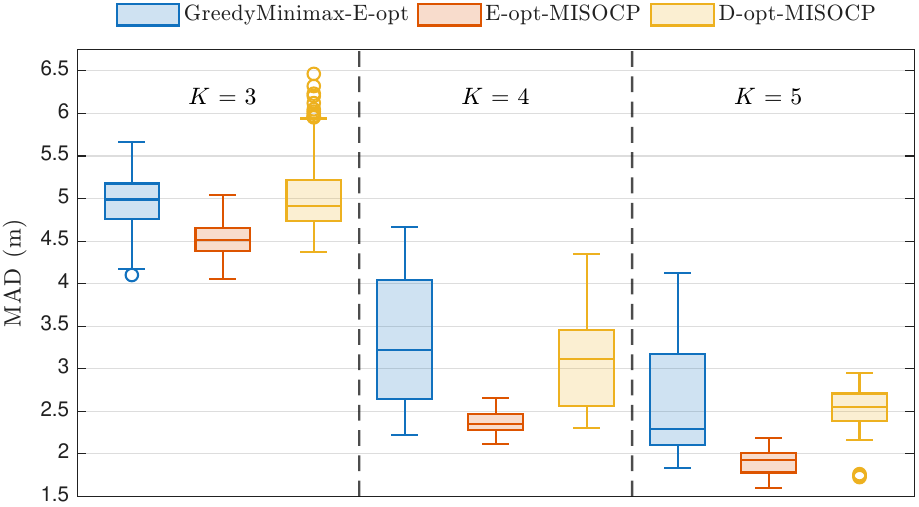}
        \caption{MAD plot}
        \label{fig_MAD_vs_k}
    \end{subfigure}

    % --- Main caption ---
   \caption{Localization performance of the E-opt-MISOCP and D-opt-MISOCP formulations as the number of selected anchors increases (increasing K). Each boxplot summarizes variability across Monte Carlo trials: the horizontal line marks the median; the box spans the 25\%--75\% interquartile range (IQR); whiskers extend to \(1.5\times\) IQR; and dots indicate outliers beyond the whiskers. Thus, a lower median indicates better typical performance, while shorter boxes and whiskers indicate lower variability and greater robustness. For both E-opt-MISOCP and D-opt-MISOCP, increasing the number of anchors \(K\) improves localization performance overall, leading to lower PEB, CER, and MAD. As expected, the formulation matched to a given metric achieves the strongest performance in that metric, with D-opt-MISOCP yielding the lowest CER and E-opt-MISOCP yielding the lowest MAD, while both also improve PEB as \(K\) increases. In contrast, the greedy baselines exhibit less consistent behavior across \(K\), particularly for MAD, where both the trends and the variability are less regular.}

    \label{fig_positioning_results_vs_K}
\end{figure*}

\begin{figure*}[htbp]
    \centering
    \begin{subfigure}[t]{0.48\textwidth}
        \centering
        \includegraphics[width=\linewidth]{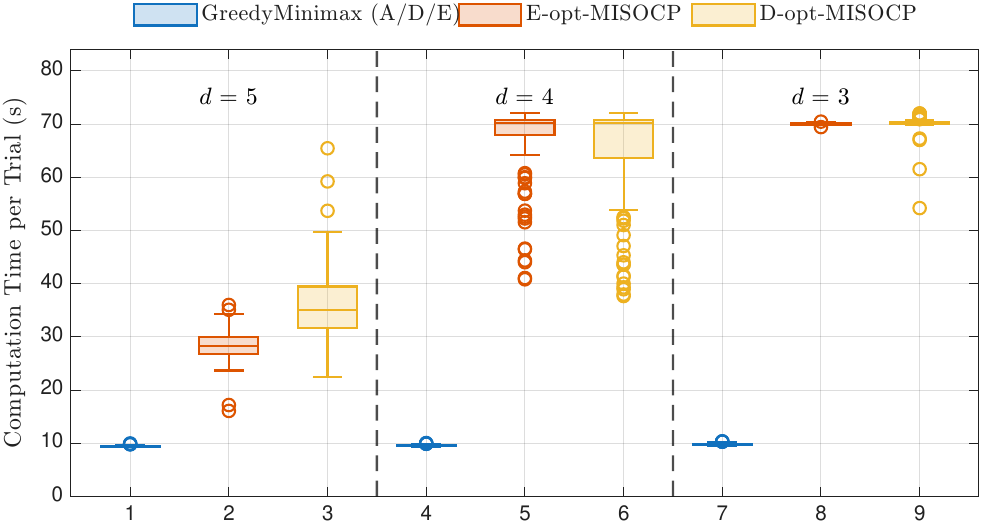}
        \caption{Solve time vs.\ sampling distance $d$}
        \label{fig_computation_d}
    \end{subfigure}
    \hfill
    \begin{subfigure}[t]{0.48\textwidth}
        \centering
        \includegraphics[width=\linewidth]{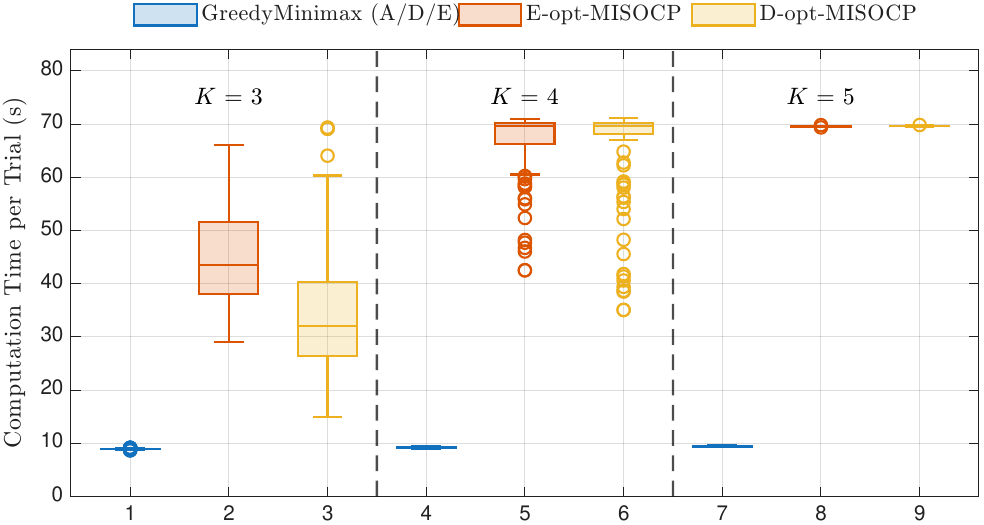}
        \caption{Solve time vs.\ number of anchors $K$}
        \label{fig_computation_k}
    \end{subfigure}
  \caption{Solve-time trends for E-opt-MISOCP and D-opt-MISOCP. (a) Finer sampling (smaller \(d\)) for the \textbf{anchor-target} dictionary defined in remark \ref{remark_anchor_node_dictionary} yields more candidate anchors and larger anchor-target dictionaries, leading to longer solve times. (b) Increasing the number of selected anchors \(K\) enlarges the cardinality-constrained search space, which likewise leads to longer solve times. In both cases, a \(70\,\text{s}\) per-trial time limit is imposed; if the solver does not terminate before this limit, it returns the best incumbent solution found up to that point.}
  \label{fig_computation_time}
\end{figure*}

\section*{Conclusion}
We presented a unified Fisher–information-based framework for optimal anchor placement in mixed LOS/NLOS O2I environments where NLOS propagation is diffraction-dominated. The framework couples a generalized diffraction path-length model that gracefully reduces to the Euclidean distance in LOS with a KED-based path-loss/SNR model, enabling a single treatment across propagation regimes. For the single-target case (with distance-independent path loss), we showed that A-, D-, and E-optimal designs coincide and admit a polygon-closure interpretation, yielding clear optimality conditions. We then extended the framework to the multi-target case via a discretization step and formulated min–max objectives, leading to tractable optimization programs with guarantees. The multi-target case can also be used to optimize over a feasible region. Numerical results illustrate the resulting anchor layouts, localization performance, and computational tradeoffs under the assumed regime of reliable diffraction-path isolation and resolvable diffraction paths in mixed LOS/NLOS propagation.

% \subsection{Varying the anchors}
% \begin{figure*}[htbp]
%     \centering
%     % --- Subfigure 1 ---
%     \begin{subfigure}[t]{0.31\textwidth}
%         \centering
%         \includegraphics[width=\linewidth]{figs/PEB_boxplot_vs_K.pdf}
%         \caption{PEB}
%         \label{fig_PEB}
%     \end{subfigure}
%     \hfill
%     % --- Subfigure 2 ---
%     \begin{subfigure}[t]{0.31\textwidth}
%         \centering
%         \includegraphics[width=\linewidth]{figs/CER_boxplot_vs_K.pdf}
%         \caption{CER}
%         \label{fig_CER}
%     \end{subfigure}
%     \hfill
%     % --- Subfigure 3 ---
%     \begin{subfigure}[t]{0.31\textwidth}
%         \centering
%         \includegraphics[width=\linewidth]{figs/MAD_boxplot_vs_K.pdf}
%         \caption{MAD}
%         \label{fig_MAD}
%     \end{subfigure}

%     % --- Main caption ---
%     \caption{}
%     \label{fig_positioning_results}
% \end{figure*}

% \begin{figure}[h]
%     \centering
%    \includegraphics[width=\linewidth]{figs/time_boxplot_vs_K.pdf}
%   \caption{}
%   \label{fig_computation_K}
% \end{figure}

% \subsection{Increasing W}
% \begin{figure}[h]
%     \centering
%    \includegraphics[width=\linewidth]{figs/MILP_vs_MISOCP_Eopt.pdf}
%   \caption{}
%   \label{fig_MILP_vs_MISOCP}
% \end{figure}
\vspace{-1em}
\appendix
\subsection{Diffraction Path Length Derivation}
\label{appendix_diffraction_path_length_derivation}
Observe from \figref{fig_system_model}, we have the $n^{\text{th}}$ target at $\bm{N}_n=[x_n,y_n,z_n]^T$, anchor indexed by $k$ at $\bm{A}_k=[x_k,y_k,z_k]^T$ and the diffracting edge $e \in \{u,l\}$. Define the length of the perpendiculars from the target to the diffraction edge $e$ as $r$ and from the anchor to the diffraction edge as $r_{\perp,e}$, we have
\[
r=\sqrt{y_n^2+\Delta^2},\quad r_{\perp,e}=\sqrt{y_k^2+(z_n-z_k+s_e\Delta)^2},
\]
where $e \in \{u,l\}, s_u=1,s_l=-1$.
The diffraction path length $p_{k,n}$ between the target and the $k^{\text{th}}$ anchor can be written as a sum of two distances $\bm{A}_k\bm{Q}+\bm{Q}\bm{N}_n$ and expanded to
\begin{align}
\label{eq_appendix_diffraction_path_length}
p_{k,n}^{\langle\text{e}\rangle} = \sqrt{(x_{k,n}^{\langle e \rangle}-x_n)^2+r^2} + \sqrt{(x_{k,n}^{\langle e \rangle}-x_k)^2+r_{\perp,e}^2}.
\end{align}

Using Fermat's principle of least time (or equivalently the diffraction law) \cite{duggal2025diffractionaidedwirelesspositioning} we set the first derivative of the path length with respect to the x-coordinate $x_{k,n}^{\langle e \rangle}$ of the diffraction point to zero i.e.
$\frac{\delta p_{k,n}^{\langle\text{e}\rangle}}{\delta x_{k,n}^{\langle e \rangle}} = 0$. We obtain the following expression
\begin{align}
\frac{(x_{k,n}^{\langle e \rangle}-x_n)^2}{(x_{k,n}^{\langle e \rangle}-x_n)^2+r^2} = \frac{(x_{k,n}^{\langle e \rangle}-x_k)^2}{(x_{k,n}^{\langle e \rangle}-x_k)^2+r_{\perp,e}^2}
\end{align}
After some algebraic manipulations we get two solutions
\[
x_{k,n}^{\langle e \rangle} = \frac{rx_k+r_{\perp,e}x_n}{r+r_{\perp,e}},\quad x_{k,n}^{\langle e \rangle} = \frac{rx_k-r_{\perp,e}x_n}{r-r_{\perp,e}}
\]
The first solution ensures $x_{k,n}^{\langle e \rangle}$ lies between $x_n$ and $x_k$ whereas the second solution lies outside this interval and can be ignored. This is a direct consequence of the Fermat's principle of least time since the first solution provides a smaller path length.
Next, we can substitute the first solution in \eqref{eq_appendix_diffraction_path_length} to obtain the final path length result in \eqref{eq_diffraction_path_length}.

\subsection{FIM of the Diffraction Path Delay}

\label{section_FIM_first_arriving_path}
Using the definition of the received signal in \eqref{eq_received_signal} and assuming Gaussian noise with power $N_0$, the log-likelihood function of the received signal \eqref{eq_received_signal} can be written as
\[
\log \chi(r(t) | \bm{\eta}) = -\frac{1}{N_0} \int_0^{T_{\text{ob}}} \left| r(t) - \sum_{l=1}^L h_l s(t - \tau_l) \right|^2 dt + \text{C}.
\]
Here, $C$ is a constant.
Let $\mathcal{L}\triangleq\{1,\dots,L\}$ denote the index set of the $L$ MPCs. Using the definition of the FIM in Definition~\ref{definition_FIM}, we obtain $(l_1,l_2)^{\text{th}}$ entry of the FIM as
{\small
\[
[\bm{\mathcal{I}_{\tau}}]_{l_1,l_2} = \frac{2}{N_0} h_{l_1}^{*}h_{l_2} \int_0^{T_{\text{ob}}} \left( \frac{d s(t - \tau_{l_1})}{dt} \right)^{*}\left( \frac{d s(t - \tau_{l_2})}{dt} \right) dt.
\]
}
Next, define $\delta_{l_1,l_2}=\tau_{l_1}-\tau_{l_2}$. We use Parseval's theorem and time-frequency properties of the Fourier transform to express the entries of the FIM as
{\small
\[
[\bm{\mathcal{I}_{\tau}}]_{l_1, l_2}
= \frac{2}{N_0} \operatorname{Re} \left\{ h_{l_1}^* h_{l_2} \int_{-\infty}^{\infty} (2\pi f)^2 |S(f)|^2 e^{j2\pi f\delta_{l_1,l_2}} df \right\}
\]
}
where \( S(f) \) is the Fourier transform of \( s(t) \).
To simplify the above expression, we use the assumption that the PSD is flat as in \eqref{eq_flat_PSD} and use the Fourier integral result
% \begin{align}
% \int_{-a}^{a} f^2 e^{j 2\pi f \Delta} df
% =& - \frac{1}{(2\pi)^2} \cdot \frac{d^2}{d\Delta^2} \left( \frac{\sin(2\pi a \Delta)}{\pi \Delta} \right) \nonumber \\ 
% =& \frac{2a \cos(2\pi a \Delta)}{(2\pi \Delta)^2}
% - \frac{2 \sin(2\pi a \Delta)}{\pi \Delta^3} \nonumber
% \end{align}

{\small
\begin{align}
&\int_{-a}^{a} f^{2} e^{j2\pi f\Delta}\,df
= -\frac{1}{(2\pi)^{2}} \, \frac{d^{2}}{d\Delta^{2}}
   \left(\frac{\sin(2\pi a \Delta)}{\pi \Delta}\right) \\ \nonumber
&= \frac{\pi^{2}a^{2}\Delta^{2}\sin(2\pi a\Delta)
   + \pi a\Delta\cos(2\pi a\Delta)
   - \tfrac{1}{2}\sin(2\pi a\Delta)}
   {\pi^{3}\Delta^{3}}.\nonumber
\end{align}
}
Now substituting $a = \frac{B}{2}$ we get the desired result for the FIM in \eqref{eq_FIM_tau}.

\subsection{Generality of the $(S_n,r_n)$ Reformulation}
\label{appendix_generality_Sn_rn}

The reformulation in Section~\ref{subsection_Sn_rn_reformulation} is not specific to the proposed diffraction-path model. It applies to any 2D localization modality whose FIM for target \(n\) can be written as
\[
\bm{\mathcal I}_{n}^{\langle 2\mathrm D\rangle}
=
\sum_{k} \lambda_{k,n}\,\bm g_{k,n}\bm g_{k,n}^{T},
\qquad
\|\bm g_{k,n}\|=1,
\]
where each term $k$ corresponds to one measurement contribution.
Refer to \cite{zekavat2019handbook} for the standard FIM definitions for Euclidean TOA ranging, RSSI and AOA-based localization. Since \(\|\bm g_{k,n}\|=1\), there exists an angle \(\psi_{k,n}\) such that
\[
\bm g_{k,n}
=
\begin{bmatrix}
\cos\psi_{k,n}\\
\sin\psi_{k,n}
\end{bmatrix}.
\]
Hence, defining $S_n,r_n,u_n,v_n$ as in \eqref{eq_definition_perimeter}, the FIM admits the same form as in Section~\ref{subsection_Sn_rn_reformulation}:
\begin{equation}
\bm{\mathcal I}_{n}^{\langle 2\mathrm D\rangle}
=
\frac{1}{2}
\begin{bmatrix}
S_n+u_n & v_n\\
v_n & S_n-u_n
\end{bmatrix}.
\label{eq_appendix_generic_Snr_form}
\end{equation}

To illustrate this, let
\[
\bm p_n \triangleq
\begin{bmatrix}
x_n\\y_n
\end{bmatrix},
\qquad
\bm a_k \triangleq
\begin{bmatrix}
a_{x,k}\\a_{y,k}
\end{bmatrix},
\]
and define
\[
d_{k,n}\triangleq \|\bm p_n-\bm a_k\|,
\qquad
\phi_{k,n}\triangleq \angle(\bm p_n-\bm a_k).
\]
Here, \(d_{k,n}\) is the Euclidean distance between anchor \(k\) and target \(n\), and \(\phi_{k,n}\) is the corresponding bearing angle. Further, \(\sigma_{k,n}^2\) denotes the variance of the measurement error associated with the \(k\)th anchor-target link for the corresponding modality, and \(\beta_k\) denotes the path-loss exponent in the RSSI model. Now, \(\psi_{k,n}\) reduces to the bearing angle \(\phi_{k,n}\) for Euclidean TOA ranging and RSSI, while for AOA it becomes the tangential angle \(\phi_{k,n}+\pi/2\) (modulo \(\pi\)). This is summarized in Table~\ref{tab_modalities_Snr}. 

\begin{table}[t]
\centering
\caption{Examples of 2D localization modalities that admit the \((S_n,r_n)\) reformulation.}
\label{tab_modalities_Snr}
\renewcommand{\arraystretch}{1.15}
\begin{tabular}{p{2.0cm}p{1.5cm}p{1.5cm}p{1.5cm}}
\hline
Modality & \(\bm g_{k,n}\) & \(\lambda_{k,n}\) & \(\psi_{k,n}\) \\
\hline
Euclidean TOA / ranging
&
\(
\begin{bmatrix}
\cos\phi_{k,n}\\
\sin\phi_{k,n}
\end{bmatrix}
\)
&
\(
\dfrac{1}{\sigma_{k,n}^2}
\)
&
\(
\phi_{k,n}
\)
\\[1.2ex]

RSSI
&
\(
\begin{bmatrix}
\cos\phi_{k,n}\\
\sin\phi_{k,n}
\end{bmatrix}
\)
&
\(
\dfrac{\beta_k^2}{\sigma_{k,n}^2 d_{k,n}^2}
\)
&
\(
\phi_{k,n}
\)
\\[1.2ex]

AOA
&
\(
\begin{bmatrix}
-\sin\phi_{k,n}\\
\cos\phi_{k,n}
\end{bmatrix}
\)
&
\(
\dfrac{1}{\sigma_{k,n}^2 d_{k,n}^2}
\)
&
\(
\phi_{k,n}+\dfrac{\pi}{2}
\)
\\
\hline
\end{tabular}
\end{table}

Thus, the \((S_n,r_n)\) reformulation and the associated polygon interpretation are not limited to the proposed diffraction-path ranging model. Extensions to TDOA and other modalities are also possible, but are beyond the scope of this paper.

\bibliographystyle{IEEEtran}{
\footnotesize
\bibliography{refs}
}
\end{document}

%% file: notation.tex
\def\nb0{{\mathbf{0}}}
\def\nb1{{\mathbf{1}}}

\newtheorem{lemma}{Lemma}
\newtheorem{thm}{Theorem}

\newtheorem{ndef}{Definition}

\newtheorem{remark}{Remark}

\def\figref#1{Fig.\,\ref{#1}}%
\DeclareMathOperator{\rank}{rank}